\newtheorem{corollary}{Corollary}
\newtheorem{definition}{Definition}
\newtheorem{proposition}{Proposition}
\newcommand{\Exp}[1]{{\rm{E}}[ #1 ]}
\newcommand{\Var}[1]{{\rm{Var}}[ #1 ]}
\newcommand{\Cov}[1]{{\rm{Cov}}[ #1 ]}
\newcommand{\tr}{\text{\rm trace}}
\renewcommand{\vec}{\rm{vec}}
\title{Core Shrinkage Covariance Estimation 
for Matrix-variate Data}
\author[1]{Peter Hoff} 
\author[1]{Andrew McCormack}  
\affil[1]{Department of Statistical Science, Duke University} 
\author[2]{Anru R. Zhang} 
\affil[2]{Department of Biostatistics and Bioinformatics, Duke University} 
\begin{document}
\maketitle

\begin{abstract}
A separable covariance model for a 
random matrix
provides a parsimonious description of the 
covariances among the rows and among the columns of the matrix,  
and permits 
likelihood-based inference 
with a very small sample size. 
However, in many applications the assumption of exact separability 
is unlikely to be met, and data analysis with a separable model
may overlook or misrepresent important dependence patterns in the data.  
In this article, we propose a compromise between separable  
and unstructured covariance estimation. We show how the set of 
covariance matrices may be uniquely parametrized in terms of 
the set of separable covariance matrices and 
a complementary set of ``core'' covariance matrices, where the core of a 
separable covariance matrix is the identity matrix. 
This parametrization defines a Kronecker-core decomposition of a 
covariance matrix.
By shrinking the 
core of the sample covariance matrix with an empirical Bayes procedure, 
we obtain an estimator 
that can adapt to the degree of separability of the population 
covariance matrix.

\smallskip
\noindent \textit{Keywords:} 
decorrelation, equivariance, Kronecker product, matrix decomposition, 
tensor, quadratic discriminant analysis,
matrix square root, whitening. 

\end{abstract}

\section{Introduction}
Many modern datasets include matrix-variate data, 
that is, a sample of $n$ matrices $Y_1,\ldots, Y_n$ 
having a common dimension $p_1\times p_2$. 
Examples of such datasets include collections of images, networks, 
gene by tissue expression arrays, and multivariate 
time series,
among others.  
One approach to the analysis of such data is to first vectorize 
each data matrix and then proceed with a method that is appropriate 
for generic multivariate data. For example, if $Y_1,\ldots,Y_n$ 
is a random sample from a population of mean-zero matrices, 
the population covariance 
could be estimated by the sample covariance 
$S=\sum_{i=1}^n y_i y_i^\top/n$, 
where for $i=1,\ldots,n$, $y_i$ is the vector of length $p=p_1\times p_2$ 
obtained by vectorizing $Y_i$.  

However, in many applications the sample size $n$ is insufficient
for such unstructured estimates to be statistically stable. 
For example, even though $p_1$ and $p_2$ might be of moderate 
magnitude individually, 
a sample size of $n\geq p_1 p_2$ is necessary for $S$ 
to be non-singular, and for the likelihood corresponding 
to a normal model to be bounded. 
Furthermore, even if the sample size is sufficient for estimation, 
an unstructured estimate such as $S$ 
may be difficult to interpret, as it is not expressed in terms of 
conceptually simple row factors or column factors. 

For these reasons, 
covariance models that are based on the matrix structure of the data
have been developed. Most popular are the separable or Kronecker-structured
covariance models 
that assume the $p \times p$ population covariance matrix is the Kronecker 
product of two smaller covariance matrices of dimension  
$p_1\times p_1$ and $p_2\times p_2$, representing across-row
and across-column covariance respectively.  
In particular, the separable covariance model 
for normally-distributed data 
\citep{dawid_1981} has been used for a wide variety of applications   
including 
environmental monitoring \citep{mardia_goodall_1993},
signal processing \citep{werner_jansson_stoica_2008}, 
image analysis \citep{zhang_schneider_2010}, 
gene expression data \citep{yin_li_2012}, 
radar detection \citep{greenewald_zelnio_hero_2016} and many others.  

In addition to its interpretability, 
a separable covariance model is appealing because of its 
statistical stability, which is a result of its parsimony as compared to 
an unstructured covariance model. 
Remarkably, the MLE in the separable normal 
model exists uniquely for any sample size $n$ larger than
$p_1/p_2 + p_2/p_1$ 
\citep{ros_bijma_demunck_degunst_2016,
soloveychik_trushin_2016,drton_kuriki_hoff_2021,
derksen_makam_2021}. 
This is in contrast to a sample size 
requirement of $n\geq p_1 p_2$ in a normal model 
with an unstructured covariance. 
However, the appropriateness of a separable covariance estimator depends 
on the extent to which the population covariance is truly separable. 
If the population covariance is not separable, a separable estimate 
might give an incomplete or misleading summary of the 
statistical dependencies in the data, or could lead to poor performance 
of statistical procedures, such as generalized least-squares or quadratic discriminant analysis, that rely on an accurate estimate of the 
population covariance. 
These and other concerns 
about the appropriateness of the separability assumption 
have been raised by \citet{stein_2005} and \cite{rougier_2017}, 
specifically in the context of random spatio-temporal processes. 
To address these concerns, 
\citet{masak_sarkar_panaretos_2022} and
\citet{masak_panaretos_2022} have proposed generalizations of the 
class of separable 
covariance operators for functional data analysis with two-dimensional 
domains (e.g., space and time).  The first of these is based on 
an approximation of 
an arbitrary positive definite covariance operator by a sum 
of separable matrices.
The second of these 
assumes the 
covariance operator is
the sum of two positive definite operators, one of which is separable 
and the other is banded, where the banding is determined by the metrics of each of the two domains. 

In this article we consider covariance estimation 
for random matrices with rows and columns that 
represent
arbitrary factors, and so 
in particular do 
not necessarily correspond to points in a spatio-temporal domain. 
We develop a covariance estimation strategy that makes 
use of the parsimony and interpretability of a separable covariance model, 
yet can consistently describe covariance matrices that are non-separable.  
This is accomplished with a new matrix decomposition for positive definite 
matrices, which we call the ``Kronecker-core decomposition'', or KCD. 
This decomposition expresses an arbitrary covariance matrix 
in terms of a low-dimensional separable covariance matrix and a complementary high-dimensional 
``core'' covariance matrix. By adaptively shrinking the 
core of the sample covariance matrix, an estimator is obtained 
that can
have a risk that is comparable to that of 
the separable estimator when the population covariance 
is truly separable, and otherwise 
has lower risk 
than 
both the separable and unstructured estimators. 

In the next section we define the Kronecker covariance 
and core covariance of an arbitrary $p_1p_2\times p_1p_2$ covariance matrix.
We show that the space of all $p_1p_2\times p_1p_2$ covariance matrices 
can be identifiably parametrized  by the product space 
of Kronecker and core covariance matrices using the Kronecker-core 
decomposition. In Section 3, we propose a class of core shrinkage estimators 
that are obtained by shrinking the core of the sample covariance matrix towards 
the identity matrix, or equivalently, shrinking the sample covariance matrix 
towards the space of separable covariance matrices. 
Such shrinkage estimators can be viewed as empirical 
Bayes estimators, where the 
amount of shrinkage is estimated from the data. 
We show that our proposed core shrinkage estimator is consistent, and 
in a simulation study in Section 4.1, 
we show that the loss of the core shrinkage estimator can be very close 
to that of an oracle Bayes estimator, and lower than that of 
both the separable and 
unstructured MLEs across a variety of conditions.  
In Section 4.2, we use core shrinkage estimators as inputs into a 
quadratic discriminant analysis for a
speech recognition task.
We observe that classifications using core shrinkage estimators have 
lower 
out-of-sample misclassification rates than those using separable or unstructured MLEs. 
A discussion of directions for further research follows in Section 5. 
Proofs of mathematical results are provided in an appendix. Replication 
code for the numerical results in this article are available at the 
first author's 
website and from the {\sf R}-package {\tt covKCD}. 

\section{Kronecker and core covariances}

\subsection{The Kronecker covariance of a random matrix}  

Let $Y$ be a mean-zero random matrix taking values in 
$\mathbb R^{p_1\times p_2}$ with a non-singular covariance matrix 
$\Sigma \in \mathcal S^+_{p}$ where $p=p_1p_2$, meaning that
$\Exp{yy^\top} = \Sigma$ where $y = \vec(Y)$.  
In what follows, we will use both 
$\Var{Y}$ and $\Var{y}$ to denote the $p \times p$ covariance matrix 
of the vectorization $y$ of $Y$.  
Recall that $\Sigma$ is \emph{Kronecker separable}, or simply 
\emph{separable}, if 
it can be expressed as $\Sigma = \Sigma_2 \otimes\Sigma_1$ for
some matrices $\Sigma_1\in \mathcal S^+_{p_1}$, 
$\Sigma_2\in \mathcal S^+_{p_2}$, where ``$\otimes$'' is the 
Kronecker product. 
In this case, the matrices $\Sigma_1,\Sigma_2$ 
(or matrices $c \Sigma_1,  \Sigma_2/c$ for any $c>0$)
are often referred to as the row covariance and column covariance
of $Y$ 
respectively. For example,  
the covariance of the  $p_2$ random variables in a common row of $Y$ 
is proportional to $\Sigma_2$, and so
$\Sigma_2$ represents the covariances of the elements 
of $Y$ across its columns. 

Let $\mathcal S_{p_1,p_2}^+ = \{ 
   \Sigma_2 \otimes \Sigma_1 : \Sigma_1\in \mathcal S_{p_1}^+, 
\Sigma_2\in \mathcal S_{p_2}^+ \} \subset \mathcal S_p^+$ be
the set of separable covariance matrices for given values of $p_1$ and $p_2$. 
A separable covariance model 
is a collection of probability distributions for $Y$ for which it is 
assumed that
$\Var{Y}  \in \mathcal S_{p_1,p_2}^+$. The most widely used 
separable model is the separable normal model, or ``matrix normal'' model \citep{dawid_1981}, 
which specifies that $Y\sim N_{p_1\times p_2}( 0 , \Sigma_2 \otimes \Sigma_1)$
for unknown $\Sigma_2\otimes \Sigma_1\in \mathcal S_{p_1,p_2}^+$.
A separable covariance model can be thought of as a bilinear 
transformation model: Let $Z$ be a $p_1\times p_2$ mean-zero 
random matrix with $\Var{Z} = I_{p}$, and let $Y = A_1 Z A_2^\top$ 
for non-singular matrices $A_1\in \mathbb R^{p_1\times p_1}$, 
$A_1\in \mathbb R^{p_2\times p_2}$. Then $\Var{Y} = A_2 A_2^\top \otimes 
 A_1 A_1^\top$, and the range of $\Var{Y}$ over all such matrices 
$A_1,A_2$ is exactly equal to $\mathcal S_{p_1,p_2}^+$. More generally, 
separability is preserved under row and column transformations of $Y$:
If $\Var{Y} = \Sigma_2\otimes \Sigma_1$, then 
\begin{align} 
\Var{ A_1 Y A_2^\top } \equiv \Var{ (A_2 \otimes A_1 ) y }  &=  
  (A_2\otimes A_1) \Var{y} (A_2\otimes A_1)^\top \nonumber  \\ 
  &=  (A_2\otimes A_1)  (\Sigma_2 \otimes \Sigma_1) (A_2\otimes A_1)^\top \nonumber   \\
 & = (A_2 \Sigma_2 A_2^\top ) \otimes (A_1 \Sigma_1 A_1^\top ).  
\label{eqn:septrans} 
\end{align} 
In the language of group theory, 
let $GL_{p_1,p_2} = \{ A_2 \otimes A_1 : 
  A_2 \in GL_{p_1},A_2\in GL_{p_2} \}$ be the 
separable subgroup of the general linear group $GL_{p}$ of 
nonsingular $p\times p$ matrices.  
The transformation in (\ref{eqn:septrans}) from 
 $\Var{Y}$ to $\Var{ A_1 Y A_2^\top }$ 
defines a transitive group action 
of $GL_{p_1,p_2}$ on $\mathcal S_{p_1,p_2}^+$. 
The group structure of the separable normal model and related 
tensor normal models has been exploited to develop methods 
for statistical estimation \citep{gerard_hoff_2015} and testing 
\citep{gerard_hoff_2016, hoff_2016a}. 

Even if $\Var{Y}$ is not separable, it still may be of interest to 
define some notion of row covariance and column covariance for 
$Y$. To this end, we identify a 
separable covariance matrix 
$K \in \mathcal S_{p_1,p_2}^+$ that summarizes 
the row and column covariance of $Y$ 
when $\Var{Y}$ is an arbitrary 
covariance matrix $\Sigma\in \mathcal S_p^+$:
\begin{definition}  
Let $\Exp{Y}=0$ and $\Var{Y} = \Sigma \in \mathcal S_p^+$. 
The Kronecker covariance of $\Sigma$ is 
$k(\Sigma) = \Sigma_2\otimes \Sigma_1$, where 
 $(\Sigma_1,\Sigma_2)$ are any matrices in 
$\mathcal S^+_{p_1}\times \mathcal S^+_{p_2}$ that 
satisfy
\begin{align} 
\label{eqn:kroneqn} 
\Sigma_1 & = \Exp{ Y \Sigma_2^{-1} Y^\top }/p_2 \\
\Sigma_2 & = \Exp{ Y^\top \Sigma_1^{-1} Y }/p_1.  \nonumber
\end{align} 
\end{definition} 
Matrices $\Sigma_1$ and $\Sigma_2$ that solve (\ref{eqn:kroneqn}) 
are weighted averages of  
across-row and across-column covariance matrices of 
whitened versions of $Y$.
For example, $\Sigma_1$ is obtained from $Y$ by first whitening  
across its columns by $\Sigma_2$.

Solutions to (\ref{eqn:kroneqn}) exist for all $\Sigma\in \mathcal S_p^+$, and all solutions 
have the same Kronecker product, and 
so the 
Kronecker covariance function 
$k: \mathcal S_{p}^+ \rightarrow  \mathcal S_{p_1,p_2}^+$ is 
well defined. The existence of solutions and uniqueness 
of their Kronecker product follow from 
existing results for the separable normal model, and the following 
alternative definition of $k(\Sigma)$ as
the element of 
$\mathcal S_{p_1,p_2}^+$ that is closest to $\Sigma$ in terms of 
a standard divergence function:
\begin{proposition} 
\label{prop:krondivergence} 
$(\Sigma_1,\Sigma_2)$ is a solution to (\ref{eqn:kroneqn}) 
if and only if $\Sigma_2\otimes \Sigma_1$ minimizes
$d(K:\Sigma) = \ln |K| + \tr(K^{-1}\Sigma)$ over $K\in \mathcal S_{p_1,p_2}^+$. 
\end{proposition}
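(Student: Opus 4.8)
The plan is to recognize $d(\Sigma_2\otimes\Sigma_1:\Sigma)$, viewed as a function of the pair $(\Sigma_1,\Sigma_2)$ ranging over the open cone $\mathcal S^+_{p_1}\times\mathcal S^+_{p_2}$, as (twice) the negative Gaussian log-likelihood up to an additive constant, evaluated at the population covariance $\Sigma$, and to identify the fixed-point system (\ref{eqn:kroneqn}) with its first-order stationarity conditions. Once this identification is made, one direction of the equivalence is immediate from the openness of the domain, while the other reduces to showing that every stationary point is a global minimizer.

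I would first rewrite the objective in terms of $\Sigma_1$ and $\Sigma_2$. Using $|\Sigma_2\otimes\Sigma_1| = |\Sigma_1|^{p_2}|\Sigma_2|^{p_1}$, and using $\Sigma=\Exp{yy^\top}$ with $y=\vec(Y)$ together with the vectorization identity $\vec(Y)^\top(\Sigma_2^{-1}\otimes\Sigma_1^{-1})\vec(Y)=\tr(\Sigma_1^{-1}Y\Sigma_2^{-1}Y^\top)$, one obtains
\begin{equation*}
d(\Sigma_2\otimes\Sigma_1:\Sigma) = p_2\ln|\Sigma_1| + p_1\ln|\Sigma_2| + \Exp{\tr(\Sigma_1^{-1}Y\Sigma_2^{-1}Y^\top)},
\end{equation*}
where the last term equals both $\tr(\Sigma_1^{-1}\Exp{Y\Sigma_2^{-1}Y^\top})$ and $\tr(\Sigma_2^{-1}\Exp{Y^\top\Sigma_1^{-1}Y})$. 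Differentiating in $\Sigma_1$ with $\Sigma_2$ held fixed and equating the gradient to zero gives $p_2\Sigma_1^{-1}=\Sigma_1^{-1}\Exp{Y\Sigma_2^{-1}Y^\top}\Sigma_1^{-1}$, i.e.\ $\Sigma_1=\Exp{Y\Sigma_2^{-1}Y^\top}/p_2$; differentiating in $\Sigma_2$ gives $\Sigma_2=\Exp{Y^\top\Sigma_1^{-1}Y}/p_1$. Hence $(\Sigma_1,\Sigma_2)$ solves (\ref{eqn:kroneqn}) precisely when it is a stationary point of $(\Sigma_1,\Sigma_2)\mapsto d(\Sigma_2\otimes\Sigma_1:\Sigma)$.

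The equivalence then splits into two parts. If $\Sigma_2\otimes\Sigma_1$ minimizes $d(\cdot:\Sigma)$ over $\mathcal S^+_{p_1,p_2}$, then since $(\Sigma_1,\Sigma_2)\mapsto\Sigma_2\otimes\Sigma_1$ maps the open product cone onto $\mathcal S^+_{p_1,p_2}$, the pair is a local minimizer of the smooth function above, so its gradient vanishes and (\ref{eqn:kroneqn}) holds. For the converse, if $(\Sigma_1,\Sigma_2)$ solves (\ref{eqn:kroneqn}) then $\Sigma_2\otimes\Sigma_1$ is a critical point of $d(\cdot:\Sigma)$ restricted to $\mathcal S^+_{p_1,p_2}$, and I would upgrade this to a global minimum using geodesic convexity: with respect to the affine-invariant metric on $\mathcal S^+_p$, the map $K\mapsto\ln|K|+\tr(K^{-1}\Sigma)$ is geodesically convex (indeed strictly convex along nonconstant geodesics, since $\ln|K|$ is geodesically affine while $t\mapsto\tr(K(t)^{-1}\Sigma)$ is a sum of exponentials with positive coefficients), and $\mathcal S^+_{p_1,p_2}$ is geodesically convex in $\mathcal S^+_p$, because the ambient geodesic $t\mapsto K_0^{1/2}(K_0^{-1/2}K_1 K_0^{-1/2})^{t}K_0^{1/2}$ joining two separable matrices is again separable, via $(A\otimes B)^t=A^t\otimes B^t$. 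Since a critical point of a geodesically convex function is a global minimizer, the converse follows. The same convexity together with coercivity of $d(\cdot:\Sigma)$ — once the scaling redundancy $\Sigma_2\otimes\Sigma_1=(c\Sigma_2)\otimes(\Sigma_1/c)$ is removed, e.g.\ by fixing $|\Sigma_1|=1$ — also yields existence and uniqueness of the minimizing Kronecker product, confirming that $k(\Sigma)$ is well defined; alternatively these facts can be quoted from the cited results on the matrix normal MLE.

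The main obstacle is precisely this last step. The objective $p_2\ln|\Sigma_1|+p_1\ln|\Sigma_2|+\Exp{\tr(\Sigma_1^{-1}Y\Sigma_2^{-1}Y^\top)}$ is a sum of a concave term and a convex term in the ordinary Euclidean geometry of $(\Sigma_1,\Sigma_2)$, so ``stationary point implies global minimum'' cannot be concluded naively; it is the geodesic convexity of $d(\cdot:\Sigma)$ along the separable submanifold (equivalently, the known global geometry of the matrix normal likelihood) that closes the argument.
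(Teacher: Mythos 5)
Your proposal is correct and follows essentially the same route as the paper: rewrite $d(\Sigma_2\otimes\Sigma_1:\Sigma)$ as $p_2\ln|\Sigma_1|+p_1\ln|\Sigma_2|+\tr(\Sigma_1^{-1}\Exp{Y\Sigma_2^{-1}Y^\top})$, identify (\ref{eqn:kroneqn}) with the first-order conditions (the paper does this blockwise via the known unique minimizer of $K_1\mapsto p_2\ln|K_1|+\tr(K_1^{-1}A)$, you do it by direct differentiation), and close the converse by geodesic convexity so that stationary points are global minimizers. The only difference is that you sketch the geodesic-convexity facts (affine $\ln|K|$, convex $\tr(K^{-1}\Sigma)$ along geodesics, separability preserved by $(A\otimes B)^t=A^t\otimes B^t$) that the paper instead quotes from \citet{wiesel_2012} and \citet{rapcsak_1991}.
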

The divergence function $d(K:\Sigma)$ is related 
to Stein's loss for covariance estimation and to the 
Kullback-Leibler divergence between two normal distributions. 
Specifically, $k(\Sigma)$ is the covariance matrix of the separable normal 
distribution that minimizes the 
Kullback-Leibler divergence to the $N_{p_1\times p_2}(0,\Sigma)$ distribution. 
This means, for example, that 
if $Y_1,\ldots, Y_n \sim$ i.i.d.\ $N_{p_1 \times p_2}(0,\Sigma)$ then 
the maximum likelihood estimator (MLE) of $\Sigma_2\otimes \Sigma_1$ 
under the  potentially misspecified model 
$Y_1,\ldots, Y_n \sim$ i.i.d.\ $N_{p_1 \times p_2}(0,\Sigma_2\otimes \Sigma_1)$ 
converges in probability to $k(\Sigma)$ as 
$n\rightarrow \infty$ \citep{huber_1967}. In the language of misspecified 
models, $k(\Sigma)$ is the ``pseudo-true'' parameter under the separable normal 
model in the case that $\Sigma$ is not necessarily separable. 

That the minimizer 
of the divergence function is unique follows from 
uniqueness results for the MLE in the 
separable normal model. The 
MLE for this model is obtained by minimizing 
over $\Sigma_2\otimes \Sigma_1\in \mathcal S_{p_1,p_2}^+$ the 
scaled log-likelihood  
\[ 
(-2/n)\times  \ln p(Y_1,\ldots, Y_n| \Sigma_2\otimes \Sigma_1)  =  
   \ln |\Sigma_2 \otimes \Sigma_1 | + \tr( (\Sigma_2\otimes \Sigma_1)^{-1} 
  S )   +   p \ln 2\pi  
\]
where $S= \sum_{i=1}^n y_iy_i^\top/n$ is the sample covariance matrix. 
Clearly, the conditions on $S$ for there to exist a unique 
MLE of $\Sigma_2 \otimes \Sigma_1$ are the same as those on $\Sigma$ 
for there 
to exist a unique minimizer of $d(K:\Sigma)$
over $K\in \mathcal S_{p_1,p_2}^+$. In particular, 
$k(S)$ is the MLE of $\Sigma_2\otimes \Sigma_1$ under the separable normal 
model when $S$ is the sample covariance matrix. 
\citet{srivastiva_vonrosen_vonrosen_2008} show that 
this MLE 
exists uniquely if $S$ is strictly positive definite, 
which implies that  $k(\Sigma)$ exists uniquely for any 
$\Sigma\in \mathcal S_p^+$. We note that solutions may also exist 
uniquely when $S$, or analogously $\Sigma$, is singular 
 \citep{
soloveychik_trushin_2016,
drton_kuriki_hoff_2021,
derksen_makam_2021}.

Numerical methods for finding 
the separable normal MLE may be used to compute the 
Kronecker covariance function. 
As shown in 
\citet{dutilleul_1999}, $\hat \Sigma_2\otimes \hat \Sigma_1$ is 
an MLE of $\Sigma_2\otimes \Sigma_1$ if 
$(\hat\Sigma_1,\hat\Sigma_2)$ satisfy 
\begin{align}
 \left (\sum_{i=1}^n Y_i \hat\Sigma_2^{-1} Y_i^\top/n\right )/p_2  & = \hat \Sigma_1  \label{eqn:kmle}  \\
 \left ( \sum_{i=1}^n Y_i^\top \hat\Sigma_1^{-1} Y_i/n \right )/  p_1 & = \hat \Sigma_2 .  \nonumber
\end{align} 
Dutilleul also provided a block coordinate descent algorithm that 
converges to the MLE when it exists uniquely. 
Because this system of equations is analogous to the 
system (\ref{eqn:kroneqn})
that define $k(\Sigma)$, 
 Dutilleul's algorithm may be implemented to numerically compute
the Kronecker covariance $k(\Sigma)$ of any $\Sigma\in \mathcal S^+_{p}$.
In this context,
given a starting value $\Sigma_2 \in \mathcal S^+_{p_2}$, 
the algorithm is to iterate the following steps until a convergence criteria is met:
\begin{enumerate}
\item  Set $\Sigma_1 =  \Exp{ Y \Sigma_2^{-1} Y^\top }/p_2;$
\item  Set $\Sigma_2 =  \Exp{ Y^\top \Sigma_1^{-1} Y }/p_1.$
\end{enumerate}
An algorithm to compute $k(\Sigma)$ is provided in the 
replication material for this article.

An important property of the Kronecker covariance function is
how it is affected by transformations of $\Sigma$, or equivalently, of $Y$. 
Recall that if $Y$ has a separable 
covariance  $\Sigma_2\otimes \Sigma_1$, then $A_1 Y A_2^\top$
has separable covariance 
$(A_2 \Sigma_2 A_2^\top) \otimes (A_1 \Sigma_1 A_1^\top)$, and so 
in this sense a linear transformation 
across the rows of $Y$ changes the 
row covariance and not the column covariance, and analogously for 
a column transformation. 
The following result shows that 
the Kronecker covariance function transforms 
in the same way, even if the covariance matrix of $Y$ is not 
separable:
\begin{proposition} 
\label{prop:kronaction} 
For 
$A_2\otimes A_1 \in GL_{p_1,p_2}$ and 
$\Sigma\in \mathcal S_{p}^+$ with $k(\Sigma) = \Sigma_2\otimes \Sigma_1$, 
\begin{align*} k( (A_2\otimes A_1) \Sigma   (A_2\otimes A_1)^\top  ) & = 
             (A_2\otimes A_1)  k(\Sigma) (A_2\otimes A_1)^\top.   \\ 
      &= (A_2 \Sigma_2 A_2^\top) \otimes (A_1 \Sigma_1 A_1^\top).
\end{align*} 
\end{proposition}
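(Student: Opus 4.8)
The plan is to verify the fixed-point system \eqref{eqn:kroneqn} directly for the transformed covariance and then appeal to uniqueness of the Kronecker product of any solution. Write $B = A_2\otimes A_1 \in GL_{p_1,p_2}$ and let $\tilde Y = A_1 Y A_2^\top$, so that $\vec(\tilde Y) = B\,\vec(Y)$ and hence $\Var{\tilde Y} = B\Sigma B^\top$. With $k(\Sigma) = \Sigma_2\otimes\Sigma_1$ fixed, set $\tilde\Sigma_1 = A_1\Sigma_1 A_1^\top \in \mathcal S_{p_1}^+$ and $\tilde\Sigma_2 = A_2\Sigma_2 A_2^\top \in \mathcal S_{p_2}^+$. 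Everything reduces to showing that the pair $(\tilde\Sigma_1,\tilde\Sigma_2)$ solves \eqref{eqn:kroneqn} with $Y$ replaced by $\tilde Y$.

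This is a short conjugation computation. Using $\tilde\Sigma_2^{-1} = A_2^{-\top}\Sigma_2^{-1}A_2^{-1}$, pulling the constant matrices out of the expectation, and cancelling $A_2^\top A_2^{-\top}$ and $A_2^{-1}A_2$,
\[
\Exp{\tilde Y\,\tilde\Sigma_2^{-1}\,\tilde Y^\top}/p_2 = A_1\bigl(\Exp{Y\Sigma_2^{-1}Y^\top}/p_2\bigr)A_1^\top = A_1\Sigma_1 A_1^\top = \tilde\Sigma_1,
\]
where the second equality is the first line of \eqref{eqn:kroneqn} for $Y$. The companion equation $\Exp{\tilde Y^\top\tilde\Sigma_1^{-1}\tilde Y}/p_1 = \tilde\Sigma_2$ is checked identically, now invoking the second line of \eqref{eqn:kroneqn}. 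Hence $(\tilde\Sigma_1,\tilde\Sigma_2)$ solves \eqref{eqn:kroneqn} for $\tilde Y$, and since every solution has the same Kronecker product (the fact recorded immediately before Proposition \ref{prop:krondivergence}), $k(B\Sigma B^\top) = \tilde\Sigma_2\otimes\tilde\Sigma_1$. The mixed-product property of the Kronecker product then gives $\tilde\Sigma_2\otimes\tilde\Sigma_1 = (A_2\Sigma_2 A_2^\top)\otimes(A_1\Sigma_1 A_1^\top) = (A_2\otimes A_1)(\Sigma_2\otimes\Sigma_1)(A_2\otimes A_1)^\top = B\,k(\Sigma)\,B^\top$, which is exactly the two displayed identities in the proposition.

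I do not expect a serious obstacle; the only points requiring care are the vectorization identity $\vec(A_1 Y A_2^\top) = (A_2\otimes A_1)\vec(Y)$ and the passage from ``is a solution of \eqref{eqn:kroneqn}'' to ``equals $k(\cdot)$'', which is precisely where the uniqueness-up-to-Kronecker-product statement is used. An equivalent and slightly slicker argument runs through Proposition \ref{prop:krondivergence}: since $GL_{p_1,p_2}$ is a group acting on $\mathcal S_{p_1,p_2}^+$, the map $K\mapsto B^{-1}KB^{-\top}$ is a bijection of $\mathcal S_{p_1,p_2}^+$ onto itself, and a one-line determinant-and-trace calculation yields $d(BK'B^\top : B\Sigma B^\top) = d(K':\Sigma) + c$ with $c$ a constant depending only on $B$; minimizing over separable $K'$ therefore reproduces $k(\Sigma)$, so the unique minimizer of $d(\,\cdot\,:B\Sigma B^\top)$ over $\mathcal S_{p_1,p_2}^+$ is $B\,k(\Sigma)\,B^\top$, which by Proposition \ref{prop:krondivergence} is $k(B\Sigma B^\top)$.
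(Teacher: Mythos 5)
Your primary argument is correct but takes a different route from the paper's. You verify the defining fixed-point system (\ref{eqn:kroneqn}) directly: with $\tilde Y = A_1 Y A_2^\top$, the conjugation computation shows that $(A_1\Sigma_1 A_1^\top,\, A_2\Sigma_2 A_2^\top)$ solves the system for $\Var{\tilde Y} = B\Sigma B^\top$, and you then invoke the uniqueness of the Kronecker product of solutions (recorded in the text just before Proposition \ref{prop:krondivergence}) together with the mixed-product property. The paper instead works entirely through the variational characterization of Proposition \ref{prop:krondivergence}: it shows $d(K : B\Sigma B^\top) = d(B^{-1}KB^{-\top} : \Sigma) + \ln|BB^\top|$ and that $K \mapsto B^{-1}KB^{-\top}$ is a bijection of $\mathcal S_{p_1,p_2}^+$ onto itself, so the minimizer transforms covariantly --- exactly the ``slicker alternative'' you sketch in your final paragraph. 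Your fixed-point route is more elementary in that it never touches the divergence, but it leans on the well-definedness of $k$ (existence of solutions plus uniqueness of their Kronecker product), which the paper itself only justifies via Proposition \ref{prop:krondivergence} and the separable-MLE uniqueness literature; the divergence route gets equivariance and uniqueness of the transformed minimizer in one stroke. Both arguments are complete and both appear, in some form, in your write-up, so there is nothing to fix.
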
 
From the perspective of group theory, the group action of 
$GL_{p_1,p_2}$ on $\mathbb R^{p_1\times p_2}$ 
defined  by $Y \mapsto A_1 Y A_2^\top$ induces a 
group action of $GL_{p_1,p_2}$ on $\mathcal S_{p}^+$ 
given by $\Sigma \mapsto (A_2\otimes A_1) \Sigma (A_2\otimes A_1)^\top$. 
The result is that the Kronecker covariance function 
$k$ is equivariant with respect to this group action -
the Kronecker covariance of the separably-transformed $\Sigma$ is 
the separably-transformed Kronecker covariance of $\Sigma$. 
 This property will 
be used throughout the remainder of this article.
Additional properties of the Kronecker covariance function include the 
following:
\begin{corollary} \
\label{cor:kroncor}
\begin{enumerate} 
\item $k(I_p)=I_p$. 
\item If $\Sigma \in \mathcal S_{p_1,p_2}^+$ then 
$k(\Sigma) = \Sigma$. 
\item For $a>0$, $k(a \Sigma ) = a k(\Sigma)$. 
\item If $\Sigma$ is diagonal then $k(\Sigma)$ is diagonal.
\end{enumerate}
\end{corollary}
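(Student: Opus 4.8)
The plan is to read all four statements off the divergence characterization of Proposition \ref{prop:krondivergence} and the equivariance of $k$ in Proposition \ref{prop:kronaction}. The only extra input is the elementary fact that for $M\in\mathcal S_p^+$ with eigenvalues $\mu_1,\dots,\mu_p$ one has $\tr(M)-\ln|M|=\sum_i(\mu_i-\ln\mu_i)\ge p$, with equality if and only if $M=I_p$.

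I would prove items 1 and 2 together. Fix $\Sigma\in\mathcal S_p^+$ and set $M=\Sigma^{1/2}K^{-1}\Sigma^{1/2}\in\mathcal S_p^+$, so that $\tr(K^{-1}\Sigma)=\tr(M)$ and $\ln|K|=\ln|\Sigma|-\ln|M|$; hence $d(K:\Sigma)=\ln|\Sigma|+\big(\tr(M)-\ln|M|\big)\ge\ln|\Sigma|+p$ for every $K\in\mathcal S_p^+$, with equality exactly at $M=I_p$, i.e.\ at $K=\Sigma$. Now if $\Sigma\in\mathcal S_{p_1,p_2}^+$ then this global minimizer $K=\Sigma$ already lies in $\mathcal S_{p_1,p_2}^+$, so by Proposition \ref{prop:krondivergence} it is a minimizer of $d(\cdot:\Sigma)$ over $\mathcal S_{p_1,p_2}^+$, giving $k(\Sigma)=\Sigma$. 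This is item 2; item 1 is the special case $\Sigma=I_p=I_{p_2}\otimes I_{p_1}\in\mathcal S_{p_1,p_2}^+$.

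Item 3 follows from equivariance: for $a>0$ write $a\Sigma=(a^{1/2}I_p)\,\Sigma\,(a^{1/2}I_p)^\top$ with $a^{1/2}I_p=I_{p_2}\otimes(a^{1/2}I_{p_1})\in GL_{p_1,p_2}$, so Proposition \ref{prop:kronaction} gives $k(a\Sigma)=(a^{1/2}I_p)\,k(\Sigma)\,(a^{1/2}I_p)^\top=a\,k(\Sigma)$. For item 4, let $\Sigma$ be diagonal. For any $s\in\{\pm1\}^{p_1}$ and $t\in\{\pm1\}^{p_2}$, the matrix $D=\mathrm{diag}(t)\otimes\mathrm{diag}(s)\in GL_{p_1,p_2}$ is diagonal and orthogonal, hence $D\Sigma D^\top=\Sigma$, and Proposition \ref{prop:kronaction} gives $k(\Sigma)=D\,k(\Sigma)\,D^\top$. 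Since conjugation by $D=\mathrm{diag}(t\otimes s)$ multiplies the $((i_1,i_2),(j_1,j_2))$ entry of $k(\Sigma)$ by $s_{i_1}t_{i_2}s_{j_1}t_{j_2}$, that entry must vanish whenever this sign can be made $-1$; and for any off-diagonal position, $(i_1,i_2)\ne(j_1,j_2)$, so either $i_1\ne j_1$ or $i_2\ne j_2$, and flipping only $s_{i_1}$ (respectively only $t_{i_2}$) makes the sign $-1$. Hence $k(\Sigma)$ is diagonal.

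The routine parts are items 1--3; the one place calling for a small idea is item 4, where one must identify the separable sign-flips $\mathrm{diag}(t)\otimes\mathrm{diag}(s)$ as the relevant symmetries of a diagonal $\Sigma$ inside $GL_{p_1,p_2}$ and then check that they suffice to annihilate every off-diagonal entry of $k(\Sigma)$.
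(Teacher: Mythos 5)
Your proof is correct. For items 1--3 you take essentially the paper's route: the paper likewise notes that the unconstrained minimizer of $\ln|K|+\tr(K^{-1}\Sigma)$ over $\mathcal S_p^+$ is $\Sigma$ itself (your $\tr(M)-\ln|M|\ge p$ computation just makes that standard fact explicit), so a separable $\Sigma$ is its own constrained minimizer, and it likewise derives scale equivariance from Proposition \ref{prop:kronaction} with a scalar multiple of the identity. The one place you genuinely diverge is item 4. The paper argues directly from the fixed-point system (\ref{eqn:kroneqn}): when $\Sigma$ is diagonal all distinct entries of $Y$ are uncorrelated, so $\Exp{Y A_1 Y^\top}$ is diagonal for \emph{every} $A_1$ (and similarly for $\Exp{Y^\top A_2 Y}$), hence any solution pair $(\Sigma_1,\Sigma_2)$ consists of diagonal matrices. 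You instead use only equivariance: the separable sign-flip matrices $\mathrm{diag}(t)\otimes\mathrm{diag}(s)$ stabilize a diagonal $\Sigma$, so conjugation by them must fix $k(\Sigma)$, and a single sign flip makes the multiplier of any off-diagonal entry equal to $-1$, forcing it to vanish. Both arguments are sound. Yours is a clean invariance argument requiring nothing beyond Proposition \ref{prop:kronaction} and the well-definedness of $k$, and it generalizes (it shows $k(\Sigma)$ inherits any symmetry of $\Sigma$ realized by conjugation inside $GL_{p_1,p_2}$); the paper's computation yields the slightly more refined fact that the factors $\Sigma_1$ and $\Sigma_2$ are themselves diagonal, not merely their Kronecker product.
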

The third item indicates that $k$ is a scale-equivariant function. 
As a result, 
the shrinkage estimator we propose in the Section 3
will be scale-equivariant. 

\subsection{The Kronecker-core parametrization and decomposition}
The Kronecker covariance function $k$ defined above is 
a surjection from $\mathcal S^+_p$ to $\mathcal S^+_{p_1 \times p_2}$
that describes the row covariance and column covariance 
of an arbitrary element of $\mathcal S_p^+$. 
We now use this function to define, for each 
$\Sigma\in \mathcal S^+_p$, 
a ``core'' covariance matrix $c(\Sigma)$ 
that is complementary to $k(\Sigma)$ in that the core lacks 
across-row and across-column covariance in some sense. 
We then show that, taken together, 
the product space of separable and core covariance matrices identifiably parametrizes $\mathcal S_p^+$.

A core covariance of $\Sigma\in \mathcal S_p^+$ 
is obtained by applying a transformation to $\Sigma$ that whitens its
Kronecker covariance. 
Specifically, 
let $H = H_2\otimes H_1$ 
be any matrix in $GL_{p_1,p_2}$ such that 
$HH^\top = k(\Sigma)$.  
By the equivariance of 
$k$, we have
\begin{align*}  
k(H^{-1} \Sigma H^{-\top} ) & =  H^{-1} k(\Sigma) H^{-\top} \\
& =  H^{-1} H H^\top H^{-\top} = I_p. 
\end{align*}
We define Kronecker-whitened versions of 
$\Sigma$ as follows:
\begin{definition} 
Let $H=H_2\otimes H_1\in GL_{p_1,p_2}$ satisfy $HH^\top = k(\Sigma)$. Then the matrix $C$ given by 
$C =H^{-1} \Sigma H^{-\top}$  is a \emph{core} of $\Sigma$. 
\end{definition} 
We call the matrix $C=H^{-1} \Sigma H^{-\top}$ 
a core of $\Sigma$ because
the 
four-way tensor $\tilde \Sigma\in \mathbb R^{p_1\times p_2 \times p_1\times p_2}$ with entries corresponding to $\Sigma$ may be expressed in 
terms of $C$, $H_1$ and $H_2$ 
via the multilinear operation   
\[
 \tilde \Sigma = \tilde C \times \{ H_1,H_2, H_1, H_2 \}, 
\]
where ``$\times$'' is the multilinear product 
and $\tilde C$ is the four-way tensor corresponding to $C$. 
Equivalently, we have 
$\vec(\Sigma) = (H_2 \otimes H_1 \otimes H_2 \otimes H_1 ) \vec(C)$. 
In the context of Tucker products, the tensor that gets multiplied 
along each mode by a matrix is called the ``core''.

There are multiple cores for a given $\Sigma$, as there are multiple 
separable matrices
$H$ for which $HH^\top= k(\Sigma)$. Conversely,  
a core of $\Sigma$ is also a core of $H\Sigma H^\top$ for 
any $H\in GL_{p_1,p_2}$. 
More generally, we say that a covariance matrix $C\in \mathcal S^+_p$ 
is a core covariance 
matrix if it is the core of some $\Sigma\in \mathcal S^+_p$, 
and so any core covariance matrix satisfies 
$k(C)=I_p$. Furthermore, suppose $C\in \mathcal S^+_p$ satisfies 
$k(C)=I_p$. Then $C$ is a core of any covariance matrix $HCH^\top$ 
for $H \in \mathcal S^+_{p_1,p_2}$. 
As such, for a given $p_1$ and $p_2$, we define the set of 
core matrices as follows:
\begin{definition} 
For a given $p_1$ and $p_2$ with $p_1\times p_2=p$, the set of 
core covariance matrices is $\mathcal C^+_{p_1,p_2} = \{ 
C \in \mathcal S_{p}^+ : k(C) = I_p \}$. 
\end{definition}  
The condition $k(C)=I_p$ 
defining $\mathcal C_{p_1,p_2}^+$ 
can alternatively be expressed as follows:
\begin{proposition}
\label{prop:coreid} 
Let $Y$ have 
covariance matrix $C \in \mathcal S_{p}^+$, 
and let $\tilde C$ be the $p_1\times p_2 \times p_1 \times p_2$ 
tensor where $\tilde C_{i,j,i',j'}= \Cov{ Y_{i,j},Y_{i',j'} }.$
Then $k(C) = I_p$ if and only if 
\begin{align*}
\Exp{YY^\top}/p_2 & \equiv 
 \sum_{j=1}^{p_2} \tilde C_{,j,,j}/p_2 = I_{p_1}  \\
\Exp{Y^\top Y}/p_1 & \equiv  \sum_{i=1}^{p_1} \tilde C_{i,,i,}/p_1 = I_{p_2}. 
\end{align*}
\end{proposition}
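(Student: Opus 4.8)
The plan is to argue directly from the fixed-point system (\ref{eqn:kroneqn}) defining $k$, using the facts recalled above that a solution $(\Sigma_1,\Sigma_2)$ to (\ref{eqn:kroneqn}) exists for every $C\in\mathcal S^+_p$ and that all solutions share the same Kronecker product, namely $k(C)$. I would begin by noting that the two tensor identities in the statement are merely coordinatewise restatements of the matrix equations $\Exp{YY^\top}=p_2 I_{p_1}$ and $\Exp{Y^\top Y}=p_1 I_{p_2}$: the $(i,i')$ entry of $\sum_{j}\tilde C_{\cdot,j,\cdot,j}$ equals $\sum_{j}\Cov{Y_{ij},Y_{i'j}}=\Exp{(YY^\top)_{ii'}}$, and symmetrically for the other sum. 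So the task reduces to showing that $k(C)=I_p$ if and only if $\Exp{YY^\top}=p_2 I_{p_1}$ and $\Exp{Y^\top Y}=p_1 I_{p_2}$.

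For the ``if'' direction I would simply verify that $(\Sigma_1,\Sigma_2)=(I_{p_1},I_{p_2})$ solves (\ref{eqn:kroneqn}): with $\Sigma_2=I_{p_2}$ the first equation becomes $I_{p_1}=\Exp{YY^\top}/p_2$, which holds by hypothesis, and likewise for the second. Since every solution of (\ref{eqn:kroneqn}) has Kronecker product $k(C)$, this gives $k(C)=I_{p_2}\otimes I_{p_1}=I_p$.

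For the ``only if'' direction, take any solution $(\Sigma_1,\Sigma_2)$ of (\ref{eqn:kroneqn}); the hypothesis $k(C)=I_p$ says that $\Sigma_2\otimes\Sigma_1=I_p$. The one point that needs care is that this does \emph{not} force $\Sigma_1=I_{p_1}$ and $\Sigma_2=I_{p_2}$ individually: inspecting the diagonal and off-diagonal blocks of $\Sigma_2\otimes\Sigma_1$, viewed as a $p_2\times p_2$ array of $p_1\times p_1$ blocks, shows that a Kronecker product of positive definite matrices can equal the identity only if each factor is a positive scalar multiple of an identity matrix, so $\Sigma_1=c\,I_{p_1}$ and $\Sigma_2=c^{-1}I_{p_2}$ for some $c>0$. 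Substituting $\Sigma_2^{-1}=c\,I_{p_2}$ into the first equation of (\ref{eqn:kroneqn}) yields $c\,I_{p_1}=c\,\Exp{YY^\top}/p_2$, hence $\Exp{YY^\top}=p_2 I_{p_1}$; substituting $\Sigma_1^{-1}=c^{-1}I_{p_1}$ into the second yields $\Exp{Y^\top Y}=p_1 I_{p_2}$, which completes the argument.

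The proof is short, and its only genuine subtlety is the scalar ambiguity in the factorization just used in the ``only if'' direction — once the constant $c$ is propagated through the fixed-point equations it cancels, and nothing is needed beyond the existence and Kronecker-product uniqueness of solutions to (\ref{eqn:kroneqn}). As an alternative one could instead read off both directions from Proposition \ref{prop:krondivergence} by checking the first-order optimality conditions for $K\mapsto d(K:C)$ at $K=I_p$, but the direct route above seems cleaner.
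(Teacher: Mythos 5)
Your proof is correct and follows essentially the same route as the paper's: verify that $(I_{p_1},I_{p_2})$ solves the fixed-point system (\ref{eqn:kroneqn}) for the ``if'' direction, observe that $k(C)=I_p$ forces any solution to have the form $(cI_{p_1},c^{-1}I_{p_2})$ for the ``only if'' direction, and identify the tensor sums with $\Exp{YY^\top}$ and $\Exp{Y^\top Y}$. You spell out the scalar-ambiguity step and the cancellation of $c$ in more detail than the paper does, which is a welcome clarification rather than a departure.
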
 
So for a core covariance matrix, 
the across-column average of the across-row covariance
is the identity matrix, and analogously 
for the across-column covariance. Intuitively, 
a core covariance has no across-row or across-column correlation 
or heteroscedasticity, on average. 

From the proposition 
we see that $\mathcal C_{p_1,p_2}^+$ is 
defined by a system of linear constraints, 
and that $\tr(C) = p$ for 
any core covariance matrix $C$, so $\mathcal C_{p_1,p_2}^+$
is a compact convex subset of $\mathcal S_{p}^+$. 
Additionally, 
the core covariances  $\mathcal C^+_{p_1,p_2}$ and 
the separable covariances $\mathcal S^+_{p_1,p_2}$ 
are nearly non-overlapping:
If $C$ is core then $k(C)=I_p$, and if $C$ is separable, 
then $k(C) =C$ by Corollary \ref{cor:kroncor}. Therefore, if $C$ is core and separable, then 
$C=I_p$. Thus $\mathcal S^+_{p_1,p_2} \cap \mathcal C^+_{p_1,p_2} = I_p$.

For every $\Sigma\in \mathcal S_p^+$ there 
is a core matrix $C\in \mathcal C^+_{p_1,p_2}$ and 
separable matrix 
$K \in \mathcal S^+_{p_1,p_2}$ such that
$\Sigma =  H C H^\top$ for some separable 
$H\in GL_{p_1,p_2}$ such that $K=H H^\top$. 
Conversely, to every $C\in \mathcal C^+_{p_1,p_2}$ and $K\in
\mathcal S^+_{p_1,p_2}$ we can define an element of $\mathcal S^+_p$ as 
$HCH^\top$ where $H\in GL_{p_1,p_2}$ and $K=HH^\top$. This suggests that there is a bijection 
between $\mathcal S^+_p$ and 
$ \mathcal S^+_{p_1,p_2}\times  \mathcal C^+_{p_1,p_2}$. 
In fact, there are many such bijections, including one for each 
way to define a separable matrix square root $H$ of $K$, 
or equivalently, one for each way to define a row and column whitening matrix 
from $K$.  
To specify a particular bijection, we need to specify a separable 
square root function.
\begin{definition}  
Let $\mathcal H$ be a subset of $GL_{p_1,p_2}$ such that
the function 
$s:\mathcal H\rightarrow \mathcal S^+_{p_1,p_2}$ defined by 
$s(H) = HH^\top$ is a bijection. 
Then $h = s^{-1}$ is a separable matrix square root function. 
\end{definition}
Essentially, a separable square root function is defined by 
a set of separable 
matrices $\mathcal H$ with unique crossproducts, the set of which 
equals the set of separable covariance matrices. 
The defining feature of such a function is that $h(HH^\top) = H$ 
for $H\in \mathcal H$. 
Examples include the following:
\begin{itemize}
\item Symmetric square root: $h(\Sigma_2\otimes \Sigma_1)=\Sigma_2^{1/2} \otimes \Sigma_1^{1/2}$, where $\Sigma_j^{1/2}$ is the symmetric square 
root of $\Sigma_j$, $j\in\{1,2\}$. 
  \item Cholesky square root: $h(\Sigma_2\otimes \Sigma_1)= 
   L_2\otimes L_1$  
where $L_j L_j^\top$ is the 
lower triangular Cholesky factorization of $\Sigma_j$, $j\in \{1,2\}$. 
 \item PCA square root: $h(\Sigma_2\otimes \Sigma_1)= 
 E_2 \Lambda_2^{1/2} \otimes E_1 \Lambda_1^{1/2}$ 
where $E_j \Lambda_j E_j^\top$ is the 
eigendecomposition of $\Sigma_j$, $j\in \{1,2\}$. 
Note that conventions on the signs and column orderings of $E$ 
need to be specified in order for $h$
to be a bijection. 
\end{itemize}      
A non-example would be $GL_{p_1,p_2}$: While the set of crossproducts of 
this set is equal to $\mathcal S^+_{p_1,p_2}$, elements of the set do 
not have unique crossproducts. 

For a given separable square root function $h$ we define the core 
covariance function $c:\mathcal S_p^+\rightarrow \mathcal C_{p_1,p_2}^+$ 
as $c(\Sigma) = H^{-1} \Sigma H^{-\top}$ where $H = h(k(\Sigma))$. 
Since the core represents ``non-separable'' covariance, 
we would hope the core function to be invariant 
to bilinear transformations 
of the form $\Sigma \mapsto (A_2\otimes A_1) \Sigma (A_2\otimes A_1)^\top$
that induce separable covariance. 
This property partly holds:
\begin{proposition}
\label{prop:coreprop}  
Let $\Sigma\in \mathcal S_{p}^+$ and 
$A\in GL_{p_1,p_2}$. Then 
\begin{enumerate}
\item $c(A \Sigma A^\top ) = (R_2\otimes R_1) c(\Sigma) (R_2\otimes R_1)^\top$
 for some 
$R_1\in \mathcal O_{p_1}$, $R_2\in \mathcal O_{p_2}$. 
\item $c(A \Sigma A^\top ) = c(\Sigma)$ if $A\in \mathcal H$ and 
$\mathcal H$ is a group. 
\end{enumerate}
\end{proposition}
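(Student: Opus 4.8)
The plan is to reduce everything to two ingredients: the equivariance of the Kronecker covariance function (Proposition~\ref{prop:kronaction}), and the elementary fact that two invertible square roots of the same positive definite matrix differ by an orthogonal factor.

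First I would set up notation. Write $A = A_2\otimes A_1$, $\Sigma' = A\Sigma A^\top$, $K = k(\Sigma)$, and $K' = k(\Sigma')$. By Proposition~\ref{prop:kronaction}, $K' = AKA^\top$. Put $H = h(K)\in\mathcal H$ and $H' = h(K')\in\mathcal H$, so that $c(\Sigma) = H^{-1}\Sigma H^{-\top}$ and $c(\Sigma') = H'^{-1}A\Sigma A^\top H'^{-\top}$. The matrix $AH$ is separable, being a product of two separable matrices, and $(AH)(AH)^\top = AKA^\top = K' = H'H'^\top$; hence $Q := H'^{-1}(AH)$ is orthogonal. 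It is moreover separable, with the explicit factorization $Q = (H_2'^{-1}A_2H_2)\otimes(H_1'^{-1}A_1H_1) =: Q_2\otimes Q_1$, each factor being invertible.

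The one genuinely technical step is to show that a separable orthogonal matrix has the form $R_2\otimes R_1$ with $R_j\in\mathcal O_{p_j}$. From $Q^\top Q = (Q_2^\top Q_2)\otimes(Q_1^\top Q_1) = I_p$ and the fact that the eigenvalues of a Kronecker product are products of eigenvalues, I would argue that $Q_1^\top Q_1 = c_1 I_{p_1}$ and $Q_2^\top Q_2 = c_2 I_{p_2}$ for scalars $c_1,c_2>0$ with $c_1c_2 = 1$; rescaling $R_j = c_j^{-1/2}Q_j$ then gives $Q = R_2\otimes R_1$ with each $R_j$ orthogonal. Substituting $H'^{-1}A = QH^{-1}$ into the expression for $c(\Sigma')$ yields $c(\Sigma') = QH^{-1}\Sigma H^{-\top}Q^\top = (R_2\otimes R_1)\,c(\Sigma)\,(R_2\otimes R_1)^\top$, which is part~1. (As a sanity check, $k$-equivariance confirms the right-hand side lies in $\mathcal C^+_{p_1,p_2}$, since $R_2\otimes R_1$ is orthogonal.) For part~2, if $A\in\mathcal H$ and $\mathcal H$ is a group then $AH\in\mathcal H$, so $AH$ is itself the distinguished square root of $K'$, i.e.\ $H' = h(K') = AH$; then $Q = H'^{-1}(AH) = I_p$ and $c(\Sigma') = c(\Sigma)$.

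I expect the main obstacle to be purely the identification of separable orthogonal matrices with $R_2\otimes R_1$ having orthogonal factors; the rest is bookkeeping with the definition of $h$ and the transformation law for $k$. It is worth emphasizing in the write-up that the non-uniqueness of the separable square root ($AH$ need not lie in $\mathcal H$ even though $H$ and $A$ do) is exactly what forces the residual orthogonal factors $R_1,R_2$ in part~1, and that the group hypothesis on $\mathcal H$ in part~2---satisfied, e.g., by the Cholesky and symmetric square root constructions---is precisely what eliminates them.
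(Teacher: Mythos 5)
Your proof is correct and follows essentially the same route as the paper: use the equivariance of $k$ to get $k(A\Sigma A^\top)=AKA^\top$, observe that $h(AKA^\top)$ and $AH$ are two separable square roots of the same matrix and hence differ by a separable orthogonal factor, and note that the group hypothesis on $\mathcal H$ forces that factor to be the identity. Your explicit argument that a separable orthogonal matrix equals $R_2\otimes R_1$ with each $R_j$ orthogonal (after reciprocal rescaling of the Kronecker factors) fills in a step the paper's proof leaves implicit.
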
  
Item 2 of the proposition says that if  $\mathcal H$ is a group then 
$c$ is a maximal invariant function of $\Sigma\in \mathcal S_p^+$ under the 
group action $\Sigma \mapsto H \Sigma H^\top$ for $H\in \mathcal H$, 
while the Kronecker covariance function $k$ is an equivariant function 
by Proposition \ref{prop:kronaction}. One such group $\mathcal H$ 
is the set of Kronecker products of lower-triangular matrices with 
positive diagonal entries, with $h$ being the Cholesky square root. 
However, 
the results on covariance estimation in the 
remainder of the article are unaffected by the 
choice of $h$ as long as it is
continuous, as is the case for the 
symmetric and Cholesky square root functions 
mentioned above. 
We assume use of one of these two
 continuous square root function for the remainder of the article. 

We now arrive at the main result of this section - an identifiable 
parametrization of the set of covariance matrices in terms of 
Kronecker and core covariance matrices:
\begin{proposition} 
\label{prop:reparam} 
The function 
$f:\mathcal  S^{+}_p 
\rightarrow \mathcal S^+_{p_1,p_2} \times \mathcal C^+_{p_1, p_2}$ 
defined by $f(\Sigma) =  ( k(\Sigma),c(\Sigma))$ is a 
homeomorphism 
with inverse 
$ g:\mathcal S_{p_1,p_2}^+ \times \mathcal C_{p_1,p_2}^+ 
 \rightarrow \mathcal S_p^+$  given by 
$g(K,C) = h(K) C h(K)^\top$. 
\end{proposition}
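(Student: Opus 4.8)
The plan is to verify that $f$ and $g$ are mutually inverse continuous maps. Continuity is the easy part: $k$ is continuous (it is defined by the Dutilleul fixed-point system, and one can invoke that solutions vary continuously with $\Sigma$, or simply cite that $k(\Sigma)$ is the unique minimizer of the strictly convex divergence $d(K:\Sigma)$ from Proposition \ref{prop:krondivergence}, which depends continuously on $\Sigma$), the chosen square root $h$ is continuous by assumption, and matrix inversion and multiplication are continuous; hence $c(\Sigma)=h(k(\Sigma))^{-1}\Sigma\, h(k(\Sigma))^{-\top}$ is continuous and so is $f$. Likewise $g(K,C)=h(K)\,C\,h(K)^\top$ is a composition of continuous maps. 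So the crux is the bijection, which I would establish by proving $g\circ f=\mathrm{id}$ on $\mathcal S_p^+$ and $f\circ g=\mathrm{id}$ on $\mathcal S^+_{p_1,p_2}\times\mathcal C^+_{p_1,p_2}$; a two-sided inverse that is continuous with continuous inverse is a homeomorphism.

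First I would check $g(f(\Sigma))=\Sigma$. Write $H=h(k(\Sigma))$, so that $HH^\top=k(\Sigma)$ since $h$ is a separable square root function, and $c(\Sigma)=H^{-1}\Sigma H^{-\top}$ by definition. Then $g(k(\Sigma),c(\Sigma))=H\,c(\Sigma)\,H^\top=H H^{-1}\Sigma H^{-\top}H^\top=\Sigma$. This direction is essentially immediate from the definitions. I would also note in passing that this shows $f$ is injective and $g$ is surjective.

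Next I would check $f(g(K,C))=(K,C)$ for $K\in\mathcal S^+_{p_1,p_2}$ and $C\in\mathcal C^+_{p_1,p_2}$. Set $H=h(K)\in\mathcal H\subset GL_{p_1,p_2}$, so $HH^\top=K$, and let $\Sigma=g(K,C)=HCH^\top$. For the Kronecker component, use equivariance of $k$ (Proposition \ref{prop:kronaction}) together with $k(C)=I_p$ (which holds because $C\in\mathcal C^+_{p_1,p_2}$): $k(\Sigma)=k(HCH^\top)=H\,k(C)\,H^\top=HH^\top=K$. So the first coordinate of $f(g(K,C))$ is $K$, as desired. For the core component, since $k(\Sigma)=K$ we have $h(k(\Sigma))=h(K)=H$, and therefore $c(\Sigma)=H^{-1}\Sigma H^{-\top}=H^{-1}HCH^\top H^{-\top}=C$. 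Hence $f(g(K,C))=(K,C)$.

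The main obstacle is making sure every invocation is legitimate: that $k$ is genuinely well-defined and continuous (so that $h\circ k$ makes sense and the formula for $c$ is unambiguous), that the equivariance identity $k(HCH^\top)=H\,k(C)\,H^\top$ applies verbatim with $H=H_2\otimes H_1$, and that $C\in\mathcal C^+_{p_1,p_2}$ really does give $k(C)=I_p$ — all of which are supplied by the earlier development (well-definedness of $k$ and Corollary \ref{cor:kroncor}, Proposition \ref{prop:kronaction}, and the definition of $\mathcal C^+_{p_1,p_2}$). A secondary point worth a sentence is that $g$ indeed lands in $\mathcal S^+_p$: $H$ is nonsingular and $C\succ0$, so $HCH^\top\succ0$. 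Once these are in place the algebra above closes, and combining the two inverse identities with the continuity remarks yields that $f$ is a homeomorphism with inverse $g$.
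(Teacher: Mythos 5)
Your bijection argument is exactly the paper's: $g\circ f=\mathrm{id}$ is immediate from the definitions, and $f\circ g=\mathrm{id}$ follows from the equivariance of $k$ (Proposition \ref{prop:kronaction}) applied with $H=h(K)$ together with $k(C)=I_p$ for $C\in\mathcal C^+_{p_1,p_2}$. That part is correct and complete.

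The gap is in the continuity of $k$, which you dismiss as ``the easy part.'' Your justification --- that $k(\Sigma)$ is the unique minimizer of a divergence ``which depends continuously on $\Sigma$'' --- does not establish continuity of the minimizer. In general, the argmin of a jointly continuous objective need not vary continuously with a parameter; one needs an additional compactness or coercivity argument to rule out minimizing sequences escaping to the boundary (here, Kronecker factors with eigenvalues drifting to $0$ or $\infty$, which is a real possibility since $\mathcal S^+_{p_1,p_2}$ is neither closed nor bounded in $\mathcal S_p^+$). Note also that $d(K:\Sigma)$ is not convex in $K$ over this set (the paper only has geodesic convexity), so ``strictly convex'' is not available as stated. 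The paper's proof spends most of its length supplying exactly the missing ingredient: working in the affine-invariant metric, it shows that if $S_n\to\Sigma$ then the eigenvalues of $K_n=k(S_n)$ are confined to a compact interval $[c,d]\subset(0,\infty)$ (via the inequality $d(K_n:S_n)\le d(I_p:S_n)$ and the coercivity of $x\mapsto\log x+a/x$), and only then runs a subsequence argument using uniqueness of the minimizer to conclude $K_n\to k(\Sigma)$. Your appeal to the Dutilleul fixed-point system is similarly an assertion of what must be proved. Everything downstream of the continuity of $k$ in your write-up (continuity of $c$, of $f$, of $g$) is fine, but this one step needs the full argument.
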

The function $g$ can be viewed as a 
parametrization of $\mathcal S_p^+$ in terms 
of $\mathcal S^+_{p_1,p_2}\times\mathcal C^+_{p_1,p_2}$. 
Conversely, the function $f$ 
provides a matrix decomposition for elements of $\mathcal S_p^+$: 
Every $\Sigma \in \mathcal S_p^+$ has a unique representation 
as $\Sigma = K^{1/2} C K^{1/2}$  for some $K\in \mathcal S_{p_1,p_2}^+$
and $C\in \mathcal C_{p_1,p_2}^+$. We refer to this representation 
as the Kronecker-core decomposition, or KCD.

\section{Core shrinkage via empirical Bayes}  
\subsection{Core shrinkage estimators}
Let $Y_1,\ldots,Y_n$ be an i.i.d.\ random sample from 
a mean-zero normal population of $p_1\times p_2$ matrices, that is 
\begin{equation}
  Y_1,\ldots, Y_n \sim  \text{ i.i.d.\ } N_{p_1\times p_2}(0, \Sigma) 
\label{eqn:mvnorm}
\end{equation}
for some unknown $\Sigma \in \mathcal S^+_{p}$ where $p=p_1\times p_2$.  
In this case where no assumptions are made on the structure of 
$\Sigma$, the 
standard estimator 
is the sample covariance matrix 
$S = \sum_{i=1}^n y_i y_i^\top/n$ where 
$y_i = \text{vec}(Y_i)$. This estimator is 
unbiased, and if $n\geq p$ it is the MLE. 
However, as is well known, the risk of $S$ can be suboptimal, 
and substantially so 
if $n$ is not somewhat larger than $p$.  

As an alternative to $S$, we propose an estimator obtained by 
shrinking $S\in \mathcal S_p^+$ towards the lower-dimensional 
subset  $\mathcal S_{p_1,p_2}^+$ of separable 
covariance matrices, using the parametrization of $\mathcal S^+_p$ 
described in the previous section. 
Let $K=k(\Sigma)$ and $C=c(\Sigma)$
be the unknown Kronecker and core covariance matrices 
of $\Sigma$, 
so that $\Sigma = K^{1/2} C K^{1/2}$
where $K^{1/2}$ is the symmetric square root of $K$. 
Because MLEs are parametrization invariant, 
the MLE $(\hat K, \hat C)$ of $(K,C)$ is 
\begin{align*}
\hat K  &= k(S) \\  
\hat C  &= c(S) = \hat K^{-1/2} S \hat K^{-1/2}. 
\end{align*}
Note that $\hat K$ is also the MLE of $\Sigma$ under the 
separable normal model that assumes $\Sigma \in \mathcal S_{p_1,p_2}^+$. 
The number of parameters needed to define $K$ and to define $C$ 
are on the order of 
$p_1^2+p_2^2$ and $p_1^2 p_2^2$ respectively, and so heuristically 
we expect that $\hat K$ is a better estimate 
of $K$ than  $\hat C$ is of $C$. 
For this reason, we consider shrinkage estimators of the 
form 
\begin{align*} 
 \hat \Sigma & = \hat K^{1/2} \hat C_w \hat K^{1/2} , \\
\hat C_w & =   (1-w) \hat C + w I_p  
\end{align*}
for some choice of  $w\in [0,1]$. 
Because the space of core matrices is convex and includes $I_p$, 
the value $\hat C_w$ is itself a core matrix and 
is a linear shrinkage estimator of $C$, shrinking the  
core $\hat C$ of the sample covariance matrix 
towards $I_p$, or equivalently, 
shrinking the sample covariance matrix $\hat\Sigma$ towards the MLE $\hat K$ of the 
separable submodel: 
\begin{align} 
  \hat \Sigma  & =  \hat K^{1/2} \left [ (1-w)\hat C + w I_p \right ] 
   \hat K^{1/2}  \nonumber  \\
& = (1-w)  \hat K^{1/2} \hat C \hat K^{1/2} + w \hat K  \nonumber \\ 
&= (1-w ) S + w \hat K.   
\label{eqn:cse} 
\end{align}
In particular, $w=1$ 
gives the MLE under the assumption that $\Sigma$ is separable, whereas 
$w=0$ gives the sample covariance, or equivalently, the 
unrestricted MLE in the case that $n\geq p$. 
Furthermore, if $w>0$ then the estimator is positive 
definite even if $n$ is much smaller than $p$, 
as long as $n$ is large enough for 
$\hat K$ to be the MLE for the separable submodel. 

\subsection{Empirical Bayes estimation}
An estimator having the form (\ref{eqn:cse}) can be viewed as an 
empirical Bayes estimator. Consider an inverse-Wishart 
prior distribution for the 
unknown covariance $\Sigma$, 
\begin{equation} 
\Sigma^{-1} \sim \text{Wishart}_p( [(\nu-p-1) \Sigma_2 \otimes \Sigma_1]^{-1}, 
  \nu ), 
\label{eqn:kronprior} 
\end{equation}
which is parametrized so that 
$\Exp{ \Sigma } = \Sigma_2\otimes \Sigma_1.$
The hyperparameter $\nu$ partly controls  how concentrated the prior 
distribution of $\Sigma$ is around the separable covariance matrix
$\Sigma_2\otimes \Sigma_1$. Under this prior distribution, the posterior 
distribution of $\Sigma$ is 
\[
\Sigma^{-1}| S \sim \text{Wishart}_p([ nS+(\nu-p-1)\Sigma_2\otimes \Sigma_1
   ]^{-1}, n+\nu ), 
\]
and the Bayes estimator under squared-error loss is the posterior mean, 
\begin{equation}
\Exp{ \Sigma |S} = (1- w)S + w  \Sigma_2 \otimes \Sigma_1  
\label{eqn:obayes}
\end{equation}
where $w = (\nu-p-1)/(n+\nu-p-1 )$. 
An empirical Bayes estimator that replaces the hyperparameter 
$\Sigma_2\otimes \Sigma_1$ with $\hat K=k(S)$ gives the estimator 
\[ \widehat{ \Exp{\Sigma| S} }  =   
  (1-w) S + w \hat K,   \]
which is the same as in (\ref{eqn:cse}). 
While $\hat K$ is not the marginal MLE 
of $\Sigma_2\otimes \Sigma_1$
under the 
prior distribution (\ref{eqn:kronprior}), 
$\hat K$ can be seen 
as a marginal moment estimator in the following sense:
Because $\Exp{\Sigma} = \Sigma_2\otimes \Sigma_1$ 
the marginal variance of a generic $Y$ is $\Sigma_2\otimes \Sigma_1$ as well. 
By Corollary \ref{cor:kroncor}, we then have 
\begin{align*}
\Sigma_1& = \Exp{ Y \Sigma_2^{-1} Y^\top } /p_2 \\
\Sigma_2& = \Exp{ Y^\top \Sigma_1^{-1} Y } /p_1, 
\end{align*}
where the expectation is with respect to the marginal distribution of 
$Y$. The value $\hat K = k(S)$ is a solution to these equations 
with the marginal expectation replaced by expectation with respect to the 
empirical distribution of $Y_1,\ldots, Y_n$, and 
so $\hat K$ is a generalized method of moments estimator 
of $\Sigma_2 \otimes \Sigma_1$.

The amount of shrinkage $w$ is determined by the hyperparameter 
$\nu$. 
Our proposed empirical Bayes estimator of $\nu$ is the maximizer 
in $\nu$ of the 
marginal density $p(S|\nu,\Sigma_2\otimes \Sigma_1)$ with
$\hat K$ plugged-in for $\Sigma_2\otimes \Sigma_1$. 
This density has an 
essentially closed-form expression due to the conjugacy of the 
inverse-Wishart prior distribution (\ref{eqn:kronprior}). 
Using standard calculations, we obtain the 
marginal density of $S$ as
\[
p(S | \nu, \Sigma_2\otimes \Sigma_1) = r  \times 
 \frac{ k(\nu) |(\nu-p-1)\Sigma_2\otimes \Sigma_1|^{\nu/2} }
  { k(\nu+n) | nS +  (\nu-p-1)\Sigma_2\otimes \Sigma_1   |^{(\nu+n)/2} }.
\]
where $r$ does not depend on $\nu$ and 
$k(\nu)^{-1} =  2^{\nu p/2} \Gamma_p(\nu/2)$, 
with $\Gamma_p$ being the multivariate gamma function. 
Now we plug-in $\hat K$ for $\Sigma_2\otimes \Sigma_1$, 
and utilize the fact that $S = \hat K^{1/2} \hat C \hat K^{1/2}$ 
to obtain 
\begin{align*}
p(S|\nu,\hat K )/b & =  
|\hat K|^{-n/2} \times 
 \frac{k(\nu)}{k(\nu+n)} (\nu-p-1)^{-np/2} 
  | I_p + \tfrac{n}{\nu-p-1} \hat C |^{-(\nu+n)/2} \\
&= 
\left(     |\hat K|^{-n/2} 2^{np/2}   \right )  \times 
  \frac{ \Gamma_p((\nu+n)/2) }{\Gamma_p(\nu/2) } 
  (\nu-p-1)^{-np/2} | I_p + \tfrac{n}{\nu-p-1} \hat C |^{-(\nu+n)/2}. 
\end{align*} 
After some additional manipulation, we have that 
$p(S|\nu, \hat K  ) \propto_\nu L(\nu)$ where 
\begin{equation} 
\label{eqn:nuobj}
 L(\nu) = \frac{ \Gamma_p((n+\nu)/2)}{\Gamma_p(\nu/2)}  \times 
 w^{\nu p/2} (1-w)^{np/2}   \times | (1-w) \hat C + w I_p |^{-(\nu+n)/2}, 
\end{equation}
with $w = (\nu-p-1)/(n+\nu-p-1)$ as before. 
Computation of $L(\nu)$ is facilitated by noting 
that the determinant term 
can be expressed as 
$|  (1-w) \hat C + w I_p| = \prod_{j=1}^p  (w+ (1-w)\hat c_j)$, 
where $\hat c_1,\ldots, \hat c_p$ are the eigenvalues of $\hat C$. 
Our proposed empirical Bayes estimator of $\nu$ is the maximizer 
$\hat \nu$ of $L$, which gives $\hat w = (\hat\nu -p-1)/(n+\hat\nu - p -1)$ 
as the amount of shrinkage. 
The resulting 
empirical Bayes core shrinkage estimator is given by 
\begin{align}
\hat\Sigma 
& = \hat K^{1/2} [ (1-\hat w ) \hat C  + \hat w I_p ] \hat K^{1/2} 
\label{eqn:ebest} 
\\
& = (1-\hat w) S + \hat w \hat K.  \nonumber
\end{align} 

To understand how the data influence the value of $\hat \Sigma$ through 
 $\hat w$, 
write $L(\nu) = a(\nu)\times b(\nu)$ where 
$b(\nu) = | (1-w) \hat C + w I_p |^{-(\nu+n)/2}$ is the part of 
$L$ that depends on the data, and $a(\nu)= L(\nu)/b(\nu)$. 
The function $a(\nu)$ is generally increasing, 
and so this part of $L$ ``favors'' large
values of  $\hat \nu$ (and $\hat w)$. 
If the sample covariance $S$ is very close to being separable, then 
$\hat C$ is very close to the identity matrix and 
so $b(\nu)$ is roughly constant in $\nu$. 
In this case, $a(\nu)$ dominates $L(\nu)$, resulting 
in a large $\hat w$ and 
strong shrinkage of $S$ towards the sample Kronecker covariance 
$\hat K$. However, if $\hat C$ is far from the identity 
then $b$ can be strongly decreasing in $\nu$, which results
in $\hat \Sigma$ being close to $S$. In summary, the degree of shrinkage 
towards the space of separable covariance matrices depends on 
how close $S$ is to being separable, as measured by how close 
$\hat C$ is to the identity matrix.

Finally, we note that $\hat \Sigma$  does not 
depend on the choice of separable square root function:
This is because if $\hat C$ and $\hat C'$ are core matrices  
of $S$ obtained from different square root functions, they 
still must satisfy 
$\hat C' = R \hat C R^\top$ for some orthogonal matrix $R$. 
This difference does not affect the empirical Bayes estimator 
of $\nu$, since 
\begin{align*}
| (1-w ) \hat C' + w I_p | &=  | (1-w ) R\hat C R^\top  + w R R^\top |  \\
   &= | R R^\top|   | (1-w) \hat C + w I_p | =  | (1-w) \hat C + w I_p |. 
\end{align*}

\subsection{Consistency} 
We now provide some consistency results for 
the components of the KCD
and  the core shrinkage estimator. 
First, we have the very general result that 
a consistent estimator of $\Sigma$ can be used to obtain
consistent estimators of $k(\Sigma)$ and $c(\Sigma)$, and vice versa:
\begin{corollary}
\label{cor:khatconsistency}
$ k(S) \stackrel{p}{\rightarrow} k(\Sigma)$ and 
$ c(S) \stackrel{p}{\rightarrow} c(\Sigma)$ 
if and only if 
$S \overset{p}{\rightarrow} \Sigma$. 
\end{corollary}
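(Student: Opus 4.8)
The plan is to deduce the corollary entirely from Proposition~\ref{prop:reparam} together with the continuous mapping theorem for convergence in probability. Proposition~\ref{prop:reparam} tells us that $f=(k,c):\mathcal S_p^+\to\mathcal S_{p_1,p_2}^+\times\mathcal C_{p_1,p_2}^+$ is a homeomorphism with continuous inverse $g$, where $g(k(\Sigma),c(\Sigma))=\Sigma$. I would equip the product space $\mathcal S_{p_1,p_2}^+\times\mathcal C_{p_1,p_2}^+$ with a product metric, so that a sequence converges there in probability if and only if each of its two coordinate sequences converges in probability. The corollary is then the statement that $S\overset{p}{\rightarrow}\Sigma$ is equivalent to $f(S)\overset{p}{\rightarrow}f(\Sigma)$, which I would obtain by applying the continuous mapping theorem once with $f$ and once with $g$.

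For the forward implication, suppose $S\overset{p}{\rightarrow}\Sigma$. Since $\Sigma\in\mathcal S_p^+$ and $\mathcal S_p^+$ is an open subset of the symmetric matrices, the event $\{S\in\mathcal S_p^+\}$ has probability tending to one, and on this event $f(S)=(k(S),c(S))$ is well defined. Because $f$ is continuous on $\mathcal S_p^+$, the continuous mapping theorem gives $f(S)\overset{p}{\rightarrow}f(\Sigma)$; unpacking the product metric, this is exactly $k(S)\overset{p}{\rightarrow}k(\Sigma)$ together with $c(S)\overset{p}{\rightarrow}c(\Sigma)$.

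For the reverse implication, suppose $k(S)\overset{p}{\rightarrow}k(\Sigma)$ and $c(S)\overset{p}{\rightarrow}c(\Sigma)$, so that $(k(S),c(S))\overset{p}{\rightarrow}(k(\Sigma),c(\Sigma))$ in the product metric. Applying the continuous map $g$ and the continuous mapping theorem yields $g(k(S),c(S))\overset{p}{\rightarrow}g(k(\Sigma),c(\Sigma))$. Since $g=f^{-1}$, the left side equals $g(f(S))=S$ and the right side equals $g(f(\Sigma))=\Sigma$, giving $S\overset{p}{\rightarrow}\Sigma$.

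I do not expect a real obstacle here; the only points needing care are bookkeeping ones. First, the maps $k$, $c$, $g$ must be applied only where defined: for small $n$ the sample covariance need not be positive definite, but since the corollary is asymptotic and $\mathcal S_p^+$ is open with $\Sigma$ in its interior, $S\in\mathcal S_p^+$ with probability approaching one, which is enough to invoke the continuous mapping theorem. Second, I would state explicitly the version of the continuous mapping theorem being used (continuity on a set whose probability limit is one, with the sequence eventually in that set with high probability) and record that convergence in $\mathcal S_{p_1,p_2}^+\times\mathcal C_{p_1,p_2}^+$ is coordinatewise; both facts are standard. The substantive content is the homeomorphism of Proposition~\ref{prop:reparam}, which has already been established.
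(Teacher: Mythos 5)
Your proposal is correct and follows exactly the route the paper takes: the paper's own justification is the one-line remark that the corollary ``follows directly from the continuity result in Proposition~\ref{prop:reparam} and the continuous mapping theorem,'' and your write-up simply makes explicit the two applications of that theorem (to $f$ and to $g=f^{-1}$) plus the bookkeeping about $S$ eventually lying in $\mathcal S_p^+$. No gap; nothing further is needed.
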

This follows directly from the continuity result in Proposition 
\ref{prop:reparam} 
and the continuous mapping theorem. 
This result can be used to show the consistency of the 
core shrinkage estimator $\hat \Sigma$:
Recall that our core shrinkage estimator can be written as 
$\hat\Sigma =(1-\hat w) S + \hat w \hat K$, where 
$S$ is the sample covariance matrix and $\hat K$ is the Kronecker covariance
of $S$. 
Consistency of $\hat\Sigma$ will follow if 
$S$ is consistent and 
the weight $\hat w$ on the separable matrix $\hat K$
converges to zero if $\Sigma$ is not exactly separable. This is because 
if 
$\Sigma\not\in \mathcal S_{p_1,p_2}^+$ but $\hat w\stackrel{p}{\rightarrow} 0$
then $\hat\Sigma \stackrel{p}{\rightarrow} \Sigma$ because 
$S$ is consistent for $\Sigma$, regardless of the separability of $\Sigma$. 
Conversely, 
if $\Sigma$ is separable then $k(\Sigma)=\Sigma$ and so 
$\hat K$ is consistent  
for $\Sigma$ by the above proposition. 
To summarize, we have the following:
\begin{proposition} 
\label{prop:consistency} 
If  $S\stackrel{p}{\rightarrow} \Sigma$
then  $\hat\Sigma\stackrel{p}{\rightarrow} \Sigma$ for 
 any $\Sigma\in \mathcal S_{p}^+$. 
If  $S\stackrel{p}{\rightarrow} \Sigma$ and 
$\Sigma \not\in\mathcal S^+_{p_1,p_2}$ then 
$\hat w \stackrel{p}{\rightarrow} 0$. 
\end{proposition}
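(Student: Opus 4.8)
The plan is to derive the consistency of $\hat\Sigma$ from the statement about $\hat w$, and then to prove the latter by a direct asymptotic analysis of the empirical Bayes objective $L(\nu)$ in (\ref{eqn:nuobj}). Writing $\hat\Sigma-\Sigma=(1-\hat w)(S-\Sigma)+\hat w(\hat K-\Sigma)$ with $\hat K=k(S)$ and using $\hat w\in[0,1]$, we get $\|\hat\Sigma-\Sigma\|\le\|S-\Sigma\|+\hat w\,\|\hat K-\Sigma\|$. If $\Sigma\in\mathcal{S}^+_{p_1,p_2}$ then $k(\Sigma)=\Sigma$ by Corollary \ref{cor:kroncor}, hence $\hat K\stackrel{p}{\rightarrow}\Sigma$ by Corollary \ref{cor:khatconsistency} and both terms on the right are $o_p(1)$. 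If $\Sigma\notin\mathcal{S}^+_{p_1,p_2}$ then, granting the second claim, $\hat w\stackrel{p}{\rightarrow}0$, while $\|\hat K-\Sigma\|=\|k(S)-\Sigma\|\stackrel{p}{\rightarrow}\|k(\Sigma)-\Sigma\|=O_p(1)$, so $\hat w\,\|\hat K-\Sigma\|\stackrel{p}{\rightarrow}0$; in either case $\hat\Sigma\stackrel{p}{\rightarrow}\Sigma$. Thus it remains only to show $\hat w\stackrel{p}{\rightarrow}0$ when $\Sigma$ is not separable.

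For that, note first that $\Sigma\notin\mathcal{S}^+_{p_1,p_2}$ forces $C^\star:=c(\Sigma)\neq I_p$ (otherwise $\Sigma=h(k(\Sigma))\,C^\star\,h(k(\Sigma))^\top=k(\Sigma)$ would be separable, by Proposition \ref{prop:reparam}), so by Corollary \ref{cor:khatconsistency} the eigenvalues $\hat c_1,\dots,\hat c_p$ of $\hat C=c(S)$ converge in probability to those of $C^\star$, which are positive, sum to $p$, and are not all equal to one; in particular $\delta:=-\sum_{j}\log c_j^\star>0$. I would reparametrize $L(\nu)$ by the weight $w\in(0,1)$ using the exact identities $\nu=wn/(1-w)+(p+1)$ and $\nu+n=n/(1-w)+(p+1)$, and write $L=a\cdot b$ as in the text, so that $\log b(w)=\tfrac{\nu+n}{2}G_n(w)$ with $G_n(w):=-\sum_{j}\log\!\big(w+(1-w)\hat c_j\big)\ge0$. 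The goal is then, for each fixed $\varepsilon\in(0,1)$, to compare the maximum of $\log L$ over $w\in[\varepsilon,1)$ with $\log L(w_n)$ evaluated at the deterministic sequence $w_n=n^{-1/2}$, and to show that the value at $w_n$ is eventually the larger; this yields $P(\hat w\ge\varepsilon)\to0$ and, $\varepsilon$ being arbitrary, $\hat w\stackrel{p}{\rightarrow}0$.

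Two estimates drive the comparison. (i) Stirling's expansion for the multivariate gamma function gives $\log a(w)=\tfrac{np}{2}\log\tfrac{n}{2}-\tfrac{np}{2}+o(n)$ with leading term independent of $w$: substituting $\nu=wn/(1-w)+(p+1)$, the $w$-dependence of $\tfrac p2\big[(\nu+n)\log\tfrac{\nu+n}{2}-\nu\log\tfrac{\nu}{2}-n\big]+\tfrac{\nu p}{2}\log w+\tfrac{np}{2}\log(1-w)$ cancels identically, leaving only the $O(\log n)$ Stirling remainder together with an $O(\log n)$ term from the additive constant $(p+1)$; the same computation applies at $w_n$, where $\nu\asymp\sqrt n\to\infty$. (ii) The map $w\mapsto G(w)/(1-w)$, with $G(w):=-\sum_{j}\log\!\big(w+(1-w)c_j^\star\big)$, is continuous and strictly decreasing on $[0,1)$, from $G(0)=\delta$ down to $0$ as $w\to1$; indeed $g:=-G$ is concave, so $H(w):=(1-w)g'(w)+g(w)$ has $H(1)=0$ and $H'(w)=(1-w)g''(w)<0$ (strict since $C^\star\neq I_p$), giving $H>0$ on $[0,1)$ and $\tfrac{d}{dw}\big(G(w)/(1-w)\big)=-H(w)/(1-w)^2<0$. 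Combining these, for $w$ in any compact subset of $(0,1)$ we have $\tfrac2n\log L(w)-\big(\log\tfrac n2-1\big)=G_n(w)/(1-w)+o_p(1)$ uniformly (using that the $\hat c_j$ are bounded away from $0$ with probability tending to one, so $G_n\to G$ uniformly), whereas $\tfrac2n\log L(w_n)-\big(\log\tfrac n2-1\big)\stackrel{p}{\rightarrow}\delta$ since $G_n(w_n)\stackrel{p}{\rightarrow}\delta$ and $(\nu+n)/n\to1$. By (ii), $\sup_{w\in[\varepsilon,1-\eta]}G(w)/(1-w)=G(\varepsilon)/(1-\varepsilon)<\delta$ for every $\eta\in(0,1-\varepsilon)$, so $\log L(w_n)-\sup_{w\in[\varepsilon,1-\eta]}\log L(w)\to+\infty$ in probability; it remains to handle $w$ close to $1$.

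I expect the main obstacle to be making the control of $\log a(w)$ genuinely uniform up to $w=1$, where $\nu\to\infty$ and the crude Stirling remainder is no longer $o(n)$. The fix is to note that $\log a(w)$ differs from its limit as $w\to1$ by only $O\!\big(n(1-w)\big)$, and to treat $w\in[1-\eta,1)$ separately: there $w\mapsto G_n(w)/(1-w)$ is non-increasing (the argument of (ii) applies verbatim to the $\hat c_j$, which are not all one with probability tending to one), hence $\log b(w)\le\tfrac n2\,G_n(1-\eta)/\eta+O_p(1)$, and $G_n(1-\eta)/\eta\stackrel{p}{\rightarrow}G(1-\eta)/\eta<G(\varepsilon)/(1-\varepsilon)$ by the monotonicity; choosing $\eta$ small then makes $\sup_{w\in[1-\eta,1)}\log L(w)$ smaller than $\log L(w_n)$ by an amount of order $n$ as well, which completes the argument that $\hat w<\varepsilon$ with probability tending to one. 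The algebraic cancellation in (i) and the monotonicity lemma (ii) are the clean parts; the boundary-layer accounting near $w\in\{0,1\}$ is where the care is needed.
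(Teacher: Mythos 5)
Your proposal is correct and follows essentially the same route as the paper's proof: the decomposition $\hat\Sigma-\Sigma=(1-\hat w)(S-\Sigma)+\hat w(\hat K-\Sigma)$ for the first claim, and, for the second, a Stirling cancellation showing the gamma-ratio factor of $L$ is asymptotically constant in the shrinkage parameter, combined with strict monotonicity of the limiting profile and an argmax comparison. Your $G(w)/(1-w)$ is exactly the paper's limiting objective $-l(r,C)=-\sum_j(1+r)\log\bigl(\tfrac{c_j+r}{1+r}\bigr)$ under the reparametrization $w=r/(1+r)$, and your monotonicity lemma (via concavity of $g$ and the auxiliary function $H$) is the paper's monotonicity of $l$ in $r$ (via $\log(1+x)\ge x/(1+x)$); the remaining differences are cosmetic --- comparing at $w_n=n^{-1/2}$ rather than at a fixed $\epsilon/2$, and treating the boundary layer near $w=1$ separately where the paper instead proves uniform convergence over all of $r\in[\epsilon,\infty)$.
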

Note that the results given in Corollary \ref{cor:khatconsistency} and Proposition \ref{prop:consistency} 
only assume that $S$ converges in probability to $\Sigma$ as some index 
$n$, 
used in the definition of $\hat w$,  goes to infinity. 
The estimator $S$ need not be 
Wishart-distributed or even a sample covariance matrix, 
and $n$ need not be a sample size.

\section{Numerical examples} 

\subsection{Monte Carlo study}    
Because of its adaptive nature, we expect that 
the core shrinkage estimator $\hat\Sigma$ outperforms 
the unrestricted MLE $S$ in general, and 
performs nearly as well as the separable MLE $\hat K$ when the 
true covariance is exactly separable.  
We examine this in a finite sample setting with a small simulation 
study. We considered two dimensions for the sample space, 
$(p_1,p_2)=(5,7)$ and $(p_1,p_2) = (13,17)$ 
which correspond to values of $p=p_1\times p_2$ being 
35 and 221, respectively. For each dimension, eight 
sample sizes $n$ were considered, ranging from 
$p_2$ to $3 p_1 p_2/2$.  
For each dimension and each sample size, population 
covariance matrices were generated under four scenarios, 
three of which were simulated
from the inverse-Wishart prior distribution (\ref{eqn:kronprior}) 
with three values of the degrees of freedom parameter $\nu$
ranging from $p+2$ to $3p+1$. 
In the fourth scenario, 
which we refer to as $\nu=\infty$, 
$\Sigma$ was set to a separable matrix (the identity matrix). 
To summarize, 
our simulation scenarios include 
$8\times 4=32$  combinations of $n$ and $\nu$ for each 
of 2 different values of $(p_1,p_2)$. 

For each of these 64 
scenarios, 200 matrices $\Sigma$ were simulated from 
(\ref{eqn:kronprior}), and from each a sample of $n$ 
random matrices from the corresponding multivariate normal 
distribution (\ref{eqn:mvnorm}) were generated. 
From each sample, we computed four estimators:
the sample covariance or MLE $S$, the separable 
MLE $\hat K$,
the core shrinkage estimator $\hat \Sigma$, and 
the oracle Bayes estimator (\ref{eqn:obayes}) which uses 
perfect knowledge of the hyperparameters  
$\nu$ and $\Sigma_2\otimes \Sigma_1$ of the prior distribution 
(\ref{eqn:kronprior}). 
For each sample and estimator, the squared error loss 
in estimating $\Sigma$ was computed. 

Before comparing the estimators in terms of loss, we 
first examine the performance of the empirical Bayes estimator 
of $\hat w$ of $w$, which determines 
the amount of shrinkage towards $\hat K$. 
Results for all simulation scenarios are shown in 
Figure \ref{fig:nuest}, 
where sample means of the 200 values of $\hat w$ are plotted 
as a function of the sample size. 
On average, $\hat w$ overestimates $w$
with the bias 
decreasing with increasing sample size and dimension $p$, 
and also being 
smaller for the smaller values of $w$.
Our intuition regarding the overestimation is that
the ideal estimate of $w$ would be obtained 
by evaluating how close $S$ is to $\Sigma_2\otimes \Sigma_1$. 
In contrast, $\hat w$ is obtained by evaluating 
how close $S$ is to $\hat K$. Since $\hat K$ is 
the closest element of $\mathcal S_{p_1,p_2}^+$ to $S$ 
by construction, 
$\hat w$ overestimates how close $S$ is to 
$\Sigma_2\otimes\Sigma_1$.

\begin{figure}
\centering{\includegraphics[height=3in]{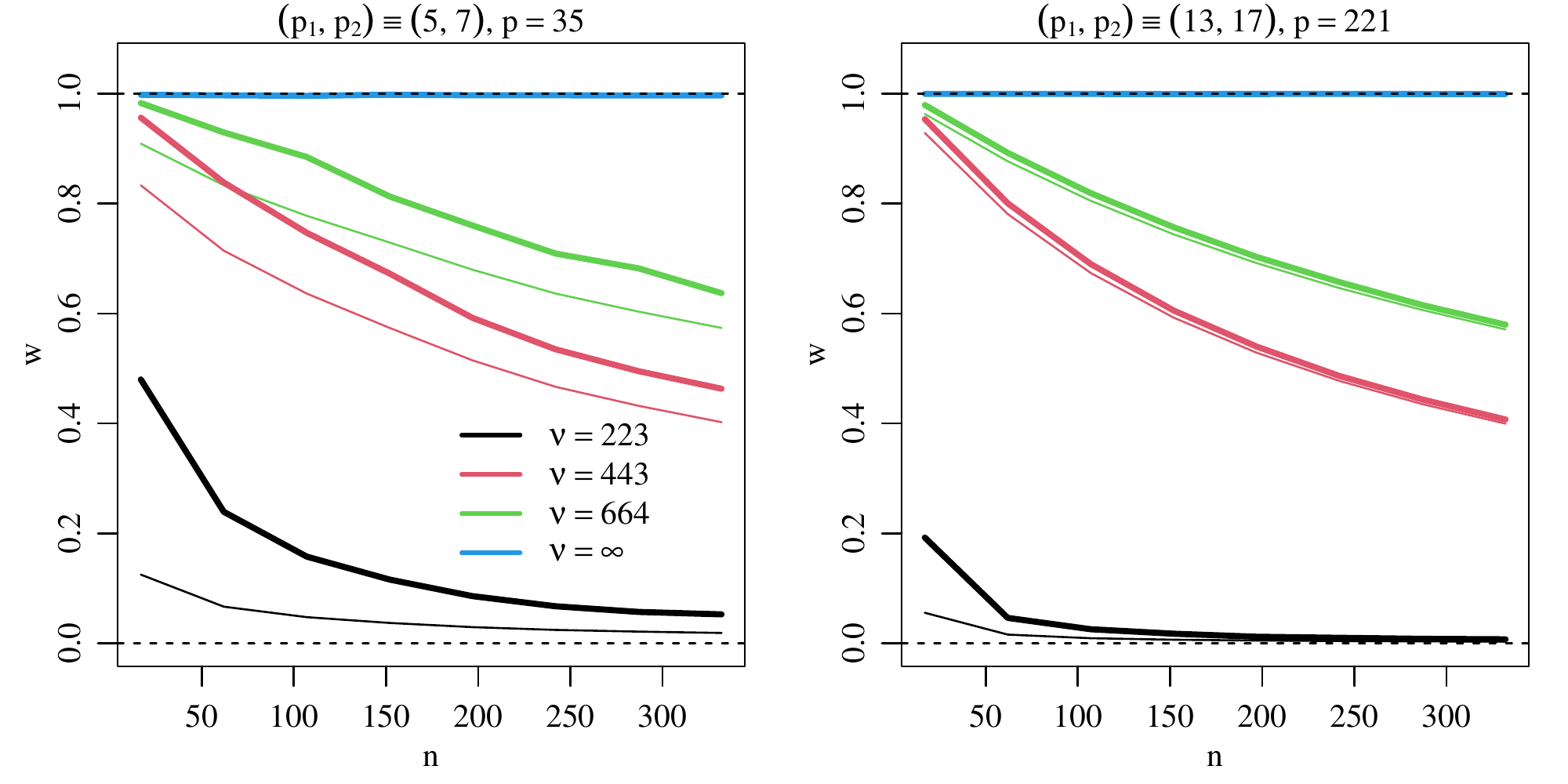}}  
\caption{Average of 200 values of $\hat w$ as a function of $\nu$ and 
$n$ in thick lines, true values of $w$ in thin lines. } 
\label{fig:nuest} 
\end{figure}

\begin{figure}
\centering{\includegraphics[height=6in]{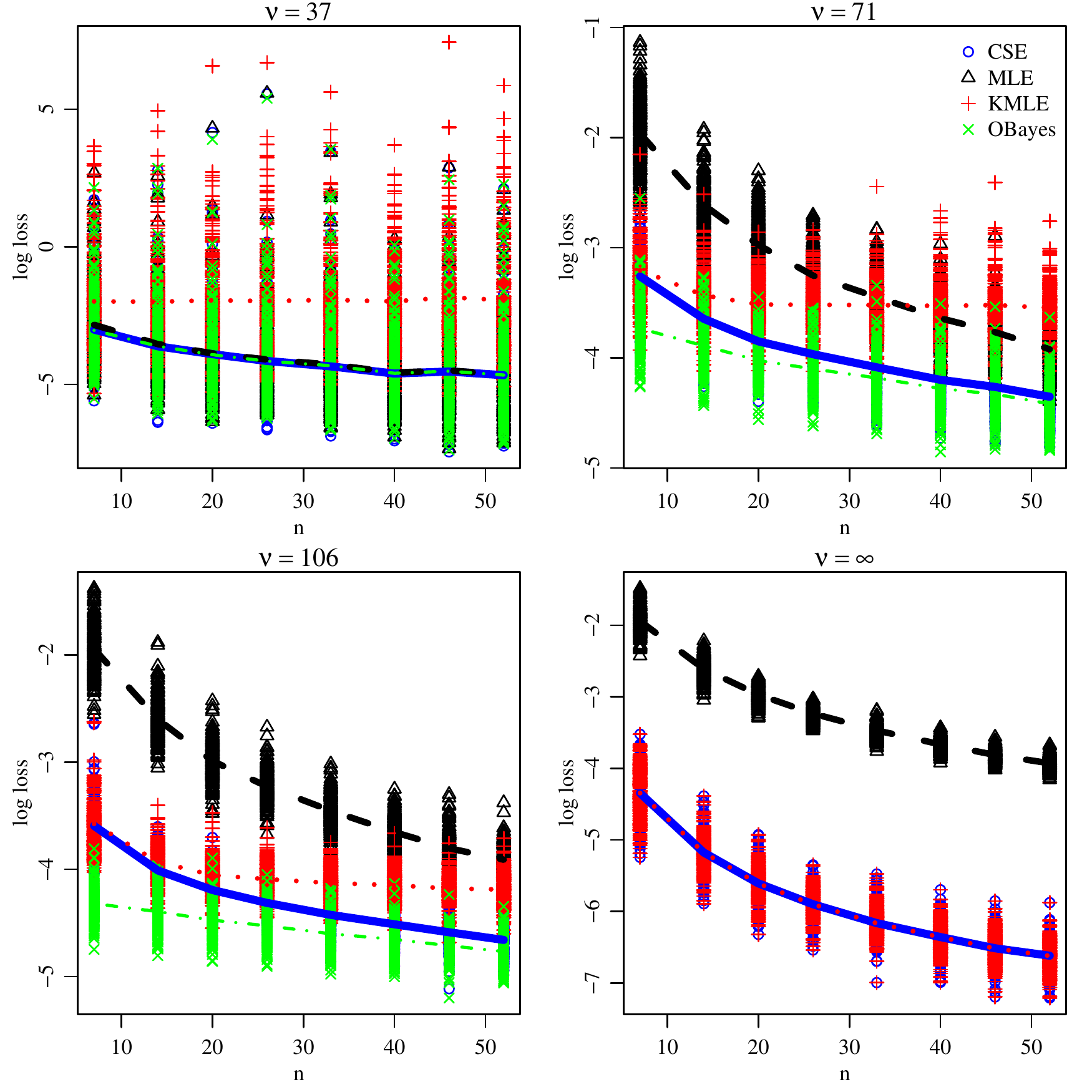}}  
\caption{Loss comparisons for $(p_1,p_2)= (5,7)$. Plotting 
symbols are given for 200 simulated datasets for each scenario. Lines 
are averages of log-loss. The estimators include the sample covariance matrix (MLE), the separable MLE (KMLE), the core shrinkage estimator (CSE) and the oracle Bayes estimator (OBayes).} 
\label{fig:lossSmall}
\end{figure} 

\begin{figure}
\centering{\includegraphics[height=6in]{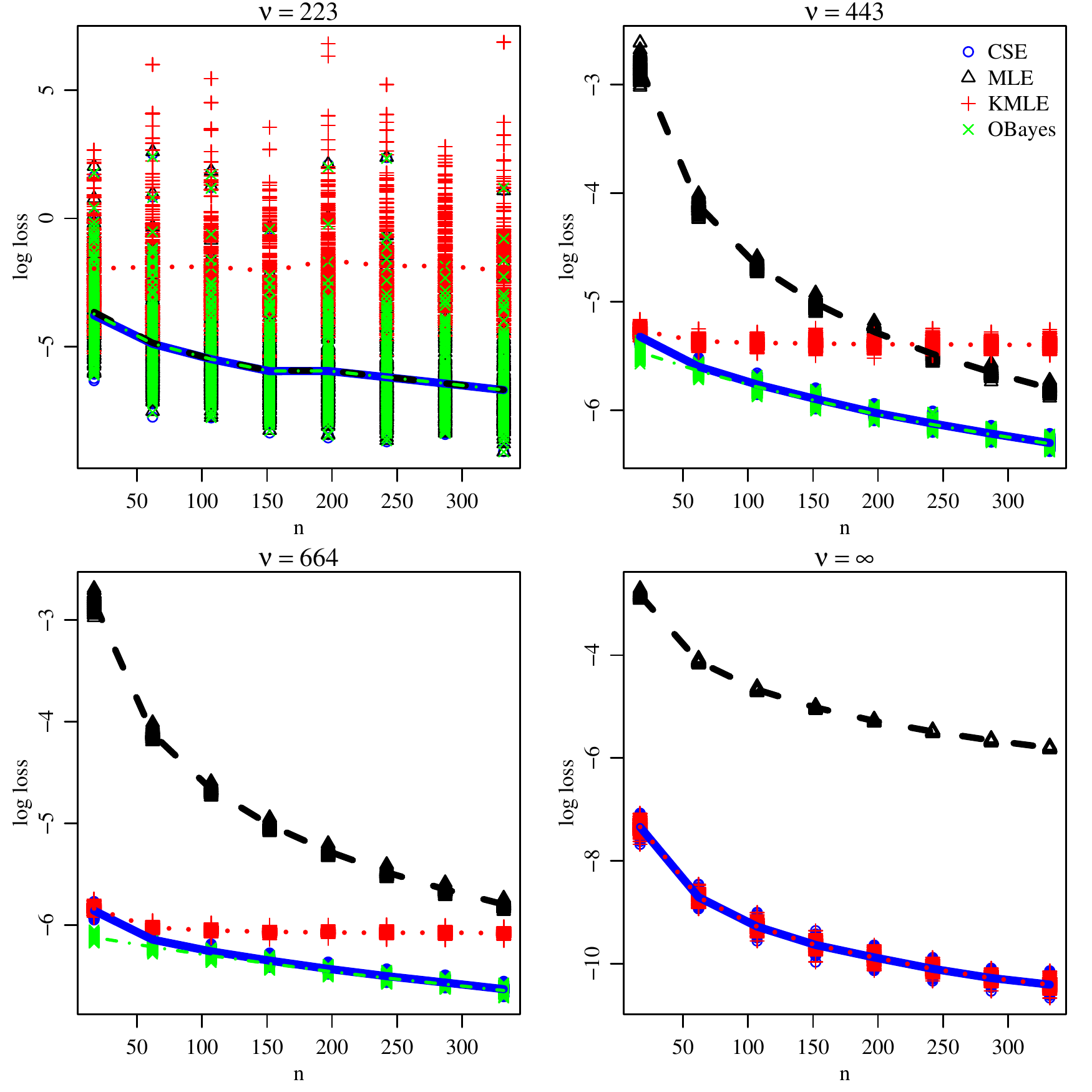}} 
\caption{Loss comparisons for $(p_1,p_2)= (13,17)$. 
Plotting 
symbols are given for 200 simulated datasets for each scenario. Lines 
are averages of log-loss. The estimators include the sample covariance matrix (MLE), the separable MLE (KMLE), the core shrinkage estimator (CSE) and the oracle Bayes estimator (OBayes).} 
\label{fig:lossMed}
\end{figure}
 
Loss comparisons for the four estimators are displayed in 
Figures \ref{fig:lossSmall} and \ref{fig:lossMed} for 
the $(p_1,p_2)=(5,7)$ and  $(p_1,p_2)=(13,17)$  scenarios 
respectively. The performance comparisons among the 
four estimators 
are similar in each of these two cases. 
The oracle Bayes estimator has the best
performance for each value of 
$\nu$.
For the smallest values of $\nu$, 
for which 
$\Sigma$ is not close to being separable, 
the performance of the unrestricted MLE is nearly identical to that of the 
oracle Bayes estimator. This is because
the value of the oracle shrinkage weight is $1/n$ 
and so these two estimators are nearly the same. 
The core shrinkage estimator (CSE) has a loss performance nearly 
identical to these two estimators, since for 
small values of $\nu$, the estimate
$\hat\nu$ is quite good. In contrast, 
the Kronecker separable MLE (KMLE) 
has worse performance on average than 
the other estimators, and its loss does not improve with 
increasing sample size. The explanation for this is that 
the Kronecker covariance $k(\Sigma)$ does not 
require a large sample size to be well-estimated, and 
so $\hat K$ is close to $k(\Sigma)$ for all sample sizes, but 
this is far from $\Sigma$ since $\Sigma$ is not close to being separable. 

The pattern changes somewhat for the larger values 
of $\nu$. In general, the loss of the KMLE
is good for small sample sizes, but does not 
improve much with increasing sample size since 
it converges to $k(\Sigma)$, which is not equal to  $\Sigma$. 
In contrast, 
the unrestricted MLE is poor for small sample sizes 
but, since it is a consistent estimator, 
has a loss that steadily decreases with increasing sample 
size. The core shrinkage estimator is generally 
as good or better than either of these estimators 
across the different sample sizes: For small $n$ 
it is about as good as the KMLE, and for 
large $n$, where both $k(\Sigma)$ and 
$\nu$ can be well-estimated, it performs 
nearly as well as the oracle Bayes estimator. 

Finally, the bottom-right panel of each figure 
gives the performance of the CSE
and unrestricted and separable MLEs in the 
case that $\Sigma$ is truly separable 
(the oracle Bayes estimator in this case is exactly 
$\Sigma$). The performance of the 
CSE and KMLE are 
nearly identical, and much better than 
that of  the unrestricted MLE. This is not too surprising
given the observation 
from Figure \ref{fig:nuest} 
that $\hat \nu$ tends to overestimate
$\nu$ when $\nu$ is a large (finite) value. 
Although any finite 
estimate $\hat\nu$ of $\nu$ is in some sense too small 
for this case where 
$\Sigma$ is exactly separable, 
$\hat\nu$ is generally large enough to make the shrinkage 
weight on $w$ nearly equal to one, which gives an estimate
that is nearly identical to the KMLE.

\subsection{Speech recognition}
Many data analysis tasks rely on accurate covariance estimates,
including tasks that are not specifically 
about covariance estimation. For example, 
quadratic discriminant analysis (QDA)
is a simple and popular method 
of classification that relies on estimates of the population 
means and covariances of each potential class to which 
new observations are to be assigned. Specifically, 
the score of 
a new observation with feature vector 
$y\in \mathbb R^p$ with respect to category $k\in \{1,\ldots, K\}$
is 
\[ 
  s_k(y) = (y-\hat \mu_k)^\top \hat \Sigma_k^{-1} (y-\hat \mu_k) + \ln |\hat \Sigma_k|, 
\]
where $(\hat \mu_k,\hat \Sigma_k)$ are estimates of the population 
mean and covariance of the feature vectors of objects in class $k$. 
If the frequencies of the different classes are equal, the classification 
rule is to assign the object with feature vector $y$ to the class 
with the minimum score. 
The accuracy of such a classification procedure will depend on, 
among other things, the accuracy of the mean and covariance 
estimates for each group. 
In cases where the feature vector $y$ is 
the vectorization of a matrix of features, 
we may consider using the core shrinkage estimator 
given by (\ref{eqn:ebest})
to make classifications, 
as an alternative to either
the unstructured MLE, 
the separable MLE, or other types of estimators. 

As a numerical illustration, we consider classification 
of spoken-word audio samples for 10 command words
 (``yes'', ``no'', ``up'', ``down'', ``left'', ``right'', ``on'', ``off'', 
``stop'',  
``go''), using the dataset provided by 
\citet{warden_2017} and described in 
\citet{warden_2018}. The data we consider include 
20,600 1-second long audio WAV files, with a per-word sample size 
ranging from 1,987 to 2,103 across the 10 words, representing 
between 989 and 1079 unique speakers for each word. 
We retain 100 audio samples per word for testing, and train our 
classifier on the remaining 19,600 audio samples. We do not make use of 
the fact that some speakers are represented multiple times in the dataset. 

A standard set of features for audio classification 
are mel-frequency cepstral coefficients (MFCCs), which 
describe an audio sample
in terms of a matrix whose dimensions represent 
periodicities in the power spectrum of the signal across 
time increments
\citep[Appendix A]{rao_manjunath_speech_2017}. 
For each audio sample in the 
dataset, we computed a $p_1\times p_2 = 99\times 13$ matrix 
of the first 13 mel cepstral coefficients across 99 time 
bins using the function {\tt melfcc} in the 
{\sf R}-package {\tt tuneR} \citep{ligges_et_al_2018}. 
Sample means and correlations for two of the words appear 
in Figures \ref{fig:mvup}  and \ref{fig:mvdown} 
(correlations instead of covariances are 
easier to visualize because of the large across-coefficient 
heteroscedasticity). 
The sample covariance matrices for these words 
are $p\times p = 1287 \times 1287$ matrices where, 
for example, the $99\times 99$ block in the upper left 
corner is the sample covariance matrix for the first 
cepstral coefficient across the 99 time points.

\begin{figure}
\centering{\includegraphics[height=2.75in]{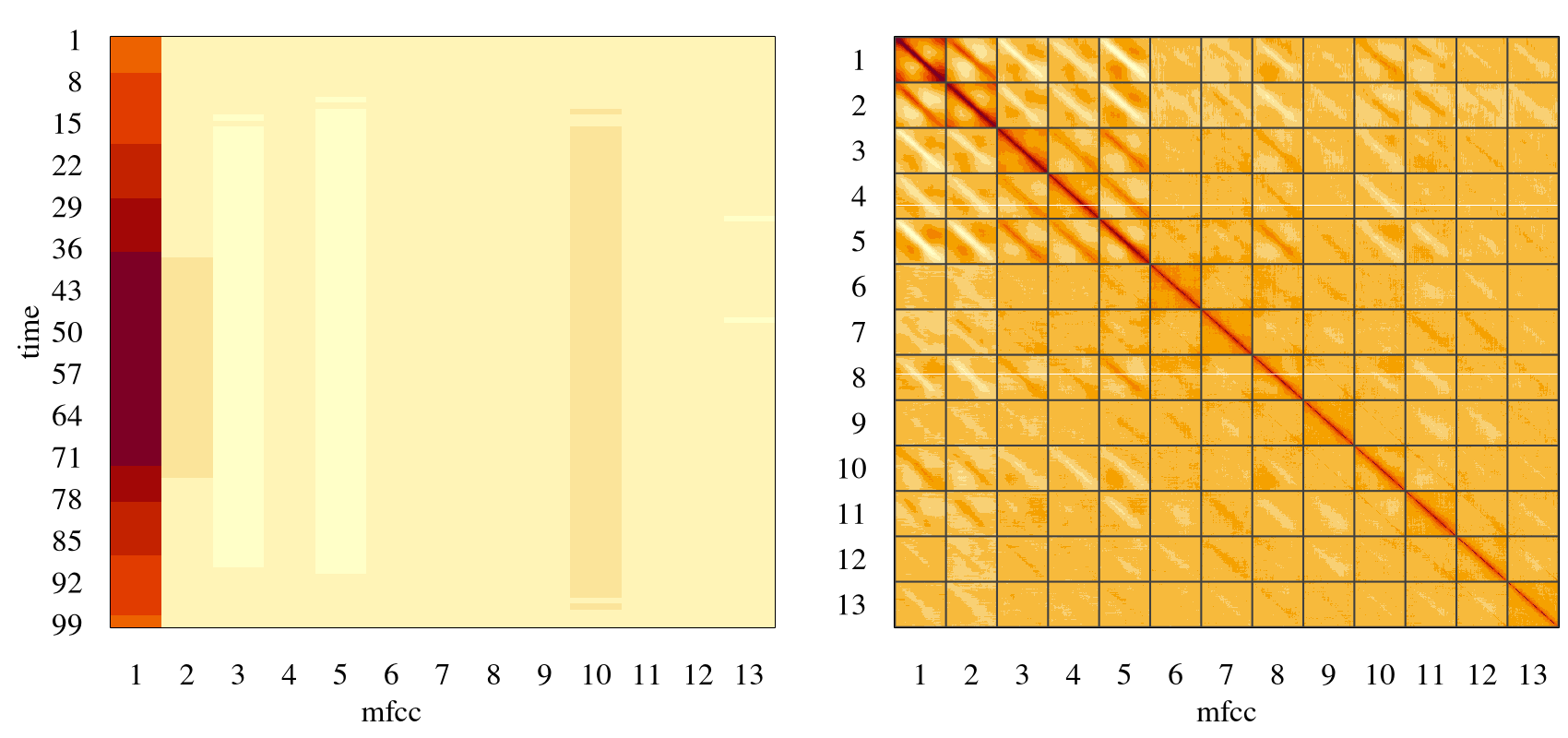}} 
\caption{Mean (left) and correlation (right) for MFCC's of the word ``up.''} 
\label{fig:mvup} 
\end{figure}

\begin{figure}
\centering{\includegraphics[height=2.75in]{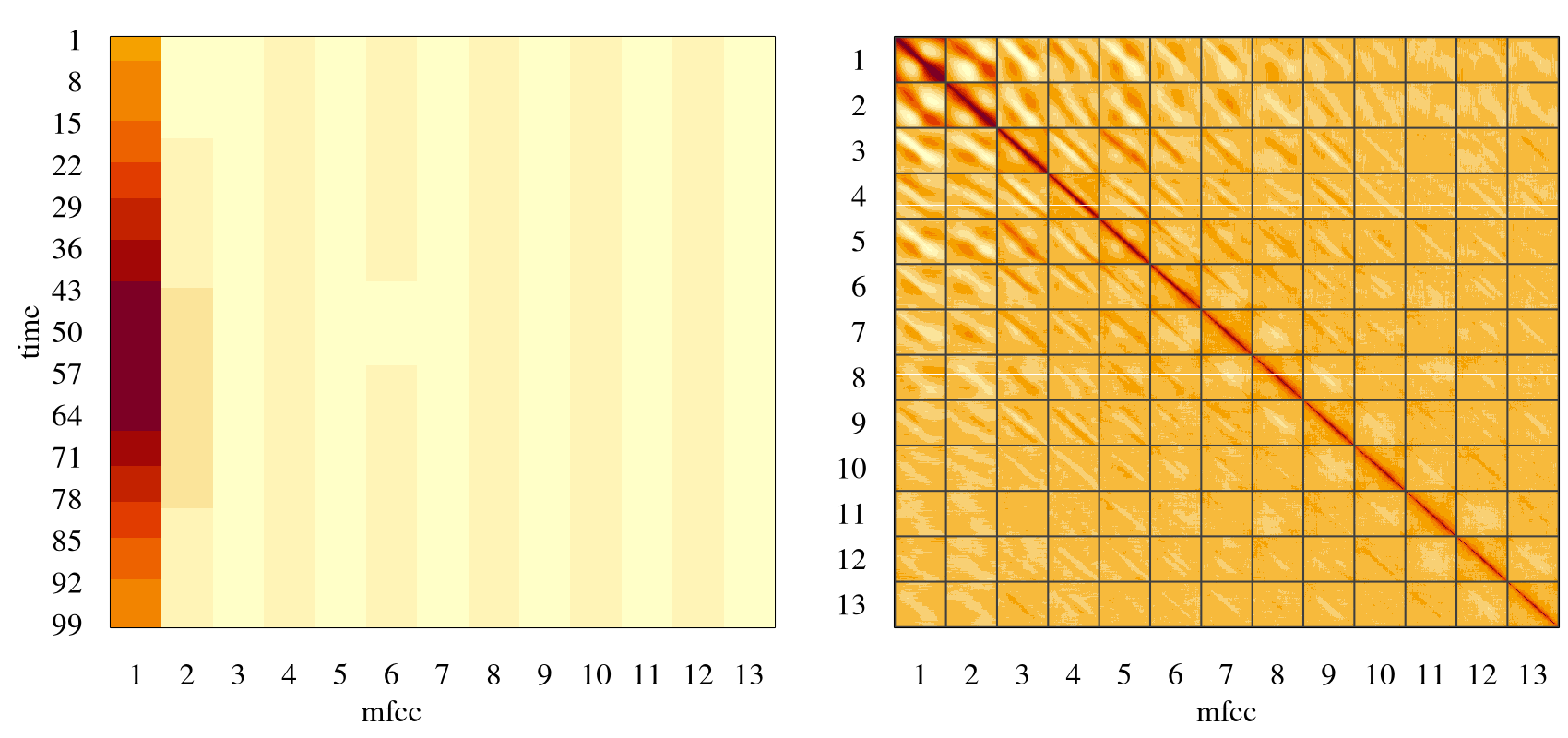}} 
\caption{Mean (left) and correlation (right) for MFCC's of the word ``down.''}
\label{fig:mvdown} 
\end{figure}

From the training data, 
we computed sample means and several different 
covariance estimates for each of the ten words. 
Our primary interest is in comparing prediction 
accuracy of the core shrinkage estimator to that of the 
unstructured and separable MLEs, 
but we also compute predictions 
using estimates that are partially pooled across groups. 
Quadratic discriminant analysis using partially pooled covariance 
estimates often have better performance than using 
class-specific sample covariance matrices, 
particularly when 
the sample size $n$ is not large compared to the dimension $p$. 
A variety of methods exist for choosing the pooling weights 
\citep{greene_rayens_1989,friedman_1989,rayens_greene_1991}. 
Here we use the approach outlined in 
\citet{greene_rayens_1989}, which is based on an inverse-Wishart 
hierarchical model 
for $\Sigma_1,\ldots, \Sigma_{10}$. The resulting 
partially pooled
covariance estimates (PPEs) are each roughly equal 
to a 32\%-68\% weighted average of the word-specific sample covariance 
and the pooled sample covariance matrices respectively.

\begin{figure}
\centering{\includegraphics[height=5.75in]{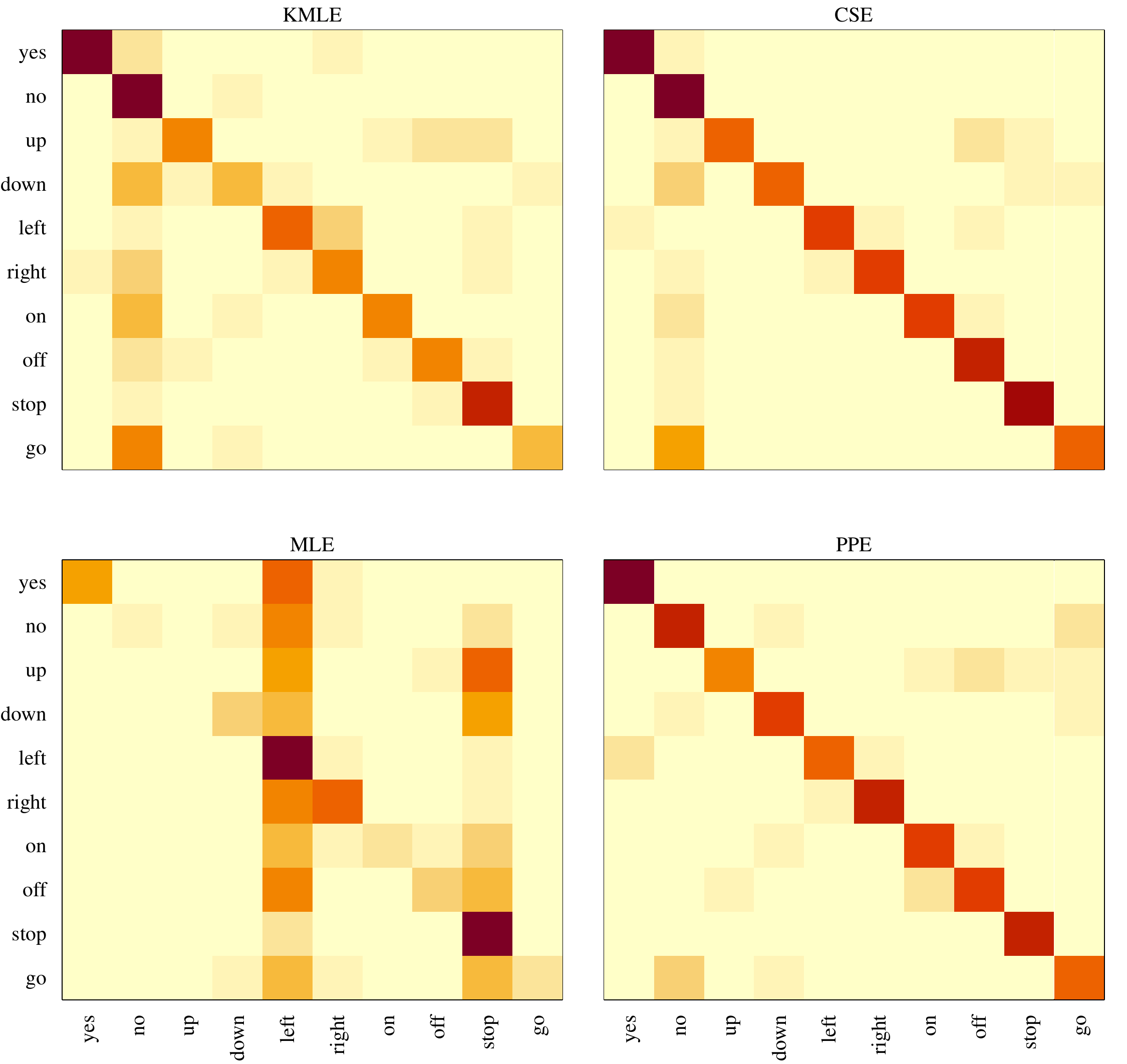}} 
\caption{Confusion matrices resulting from the four covariance estimates. Rows 
correspond to target words and columns correspond to predictions.}
\label{fig:confusion}
\end{figure}

Classifications for the 100 training observations 
were made using each of the covariance estimates. 
Confusion matrices 
are displayed in 
Figure \ref{fig:confusion}, with the true word classes
along the rows, and the predicted classes along the columns. 
For example, the word ``go'' is most frequently misclassified 
as ``no''.
From the figure,
QDA with the 
core shrinkage estimator appears to be substantially more accurate than 
using either the unstructured or separable MLEs, 
and is similar to using the partially pooled estimates. 
Rates of correct classification 
across all words  for all four QDA classifiers are given in 
Table \ref{tab:pcc}. 
The CSE performs better than the KMLE
for all words, and better than the unrestricted MLE 
for all words except ``left'' and ``stop''. However, this apparent 
good performance on these two words is misleading, as it is 
a result of this classifier assigning most words to being 
either in one of these two categories, as can be seen 
from Figure \ref{fig:confusion}. 
Additionally, the CSE is as good or better than 
the PPE for 
seven of the ten words. 
We note that the PPE 
is, like the CSE, a type of shrinkage estimator, 
although one that does not make use of the matrix structure of 
the data. 

\begin{table}
\begin{center}
\begin{tabular}{l|cccc}
  &  KMLE &  CSE & MLE & PPE  \\  \hline  
yes  &  0.69 & 0.79 & 0.37 & 0.82  \\
no  &  0.74 & 0.82 & 0.09 & 0.66  \\
up  &  0.38 & 0.51 & 0.04 & 0.46  \\
down  &  0.30 & 0.53 & 0.24 & 0.57  \\
left  &  0.44 & 0.60 & 0.77 & 0.51  \\
right  &  0.40 & 0.60 & 0.45 & 0.67  \\
on  &  0.41 & 0.59 & 0.16 & 0.58  \\
off  &  0.41 & 0.65 & 0.22 & 0.57  \\
stop  &  0.61 & 0.70 & 0.75 & 0.66  \\
go  &  0.30 & 0.50 & 0.14 & 0.48  
\end{tabular} 
\end{center} 
\caption{Rates of correct classification on the test dataset 
for the four  classifiers}
\label{tab:pcc}  
\end{table}

\section{Discussion}
Many classic estimators of covariance matrices are obtained 
by first computing the eigendecomposition of the sample covariance matrix
and then regularizing the resulting eigenvalues 
\citep{stein_1975,takemura_1983}. The core shrinkage estimator proposed in this article can be viewed analogously: the Kronecker-core decomposition of the sample covariance matrix is computed, and then the resulting core is regularized. 
However, while 
existing distributional results for the 
sample eigenvalues 
permit theoretical risk calculations for
unstructured covariance estimators, we lack such detailed knowledge 
of the distribution of sample core matrices. 
Further research on the distribution of sample cores 
could permit theoretical comparisons of different core shrinkage estimators. 

Empirical results from the speech recognition task in Section 4.1 
indicate that in this multi-group setting,
 two distinct types of shrinkage - towards separability and 
towards a common unstructured covariance matrix - both provided 
performance improvements.
This suggests that further improvements might be obtained 
with an estimator that combines these two types of shrinkage. Such 
an estimator could be obtained by empirically assessing the 
degree to which the covariance matrices are separable, as well 
as the degree to which they are similar to each other, and then shrinking 
the sample covariance matrices appropriately. 

The results in this article extend naturally 
to separable covariance models for tensor-valued data, that is, 
data arrays having 
three or more 
index sets. 
For example, an empirical Bayes 
covariance estimator 
that shrinks a sample covariance matrix towards 
a Kronecker product of several smaller covariance matrices, 
one for each index set, 
can be derived as in Section 3.2, using the 
same objective function (\ref{eqn:nuobj}) to determine the amount of shrinkage. 
A less straightforward extension would be an estimator
that adaptively shrinks towards an appropriate 
separable submodel, that is, submodels that are separable 
after various index sets of the data array have been collapsed.

\appendix 

\section*{Proofs} 
\begin{proof}[Proof of Proposition \ref{prop:krondivergence}]  
We first obtain an identity that relates the expectations 
in (\ref{eqn:kroneqn}) to the trace term in the divergence function $d(K:\Sigma)$. 
Letting $y$ be the vectorization of $Y$, 
for $(K_1,K_2) \in \mathcal S_{p_1}^+ \times \mathcal S_{p_2}^+$ we have
\begin{align*}  
\tr ( (K_2^{-1} \otimes K_1^{-1} ) \Sigma ) &= 
\Exp{ \tr ( (K_2^{-1} \otimes K_1^{-1} ) y y^\top ) }  \\
 &= \Exp{ y^\top (K_2^{-1} \otimes K_1^{-1} ) y } \\  
 &= \Exp{ \tr( Y^\top K_1^{-1} Y K_2^{-1} ) } \\
 &= \tr( K_1^{-1} \Exp{ Y K_2^{-1} Y^\top} ) = 
    \tr( K_2^{-1} \Exp{ Y^\top K_1^{-1} Y} ). 
\end{align*}
Therefore, for $K=K_2\otimes K_1$ the divergence function may be written 
\begin{align*} 
d(K:\Sigma)  
&= p_2 \ln|K_1|+ p_1 \ln |K_2| + \tr( K_1^{-1} \Exp{ Y K_2^{-1} Y^\top} ) \\
&= p_2 \ln|K_1|+ p_1 \ln |K_2| + \tr( K_2^{-1} \Exp{ Y^\top K_1^{-1} Y} ). 
\end{align*}
Now suppose that $\Sigma_2\otimes \Sigma_1\in \mathcal S_{p_1,p_2}^+$ 
minimizes the divergence. 
Then $\Sigma_1$ must also be the minimizer of the divergence
in $K_1$ when $K_2$ is fixed at $\Sigma_2$, that is, 
$\Sigma_1$ minimizes
$   p_2 \ln |K_1| + \tr( K_1^{-1} \Exp{Y \Sigma_2^{-1} Y^\top } )$
over $K_1\in \mathcal S_{p_1}^+$. It is well known 
\citep[Section 4.1]{anderson_2003}
that 
this function of $K_1$ is 
uniquely minimized by $\Exp{Y \Sigma_2^{-1} Y^\top }/p_2$, 
and so $\Sigma_1 = \Exp{Y \Sigma_2^{-1} Y^\top }/p_2$. 
Similarly, $\Sigma_2$ must equal 
$\Exp{Y^\top \Sigma_1^{-1} Y }/p_1$, and so $(\Sigma_1,\Sigma_2)$ 
is a solution to (\ref{eqn:kroneqn}). 

Conversely, 
let $f(K_1,K_2:\Sigma) = \ln|K_2\otimes K_1| + \tr(K_2 \otimes K_1)^{-1} \Sigma))$ be the divergence written as a real-valued function on
$\mathcal S_{p_1} \times \mathcal S_{p_2}$. 
Differentiating $f$ with respect to $(K_1,K_2)$ shows that 
the stationary points of $f$ are 
the solutions to (\ref{eqn:kroneqn}). 
Although $f$ is not convex, it is geodesically convex \citep{wiesel_2012}, and so by Corollary 3.1 of \citet{rapcsak_1991}, every stationary point of $f$ is a global minimizer of $f$. Thus if $(K_1,K_2)$ is a solution to (\ref{eqn:kroneqn}) then $K_2\otimes K_1$ is a minimizer of $d$. 
\end{proof}

\begin{proof}[Proof of Proposition \ref{prop:kronaction}]   
Let $A = A_2 \otimes A_1$. 
For each $K$, we have
\begin{align*} 
d(K: A \Sigma A^\top)  &= 
\ln |K| + \tr( K^{-1} A \Sigma A^\top )  \\
&=   \ln | A^{-1} K A^{-\top}| + \tr( (A^{-1}  K A^{-\top }) ^{-1} \Sigma)  + \ln | A A^\top |   \\
  & \equiv d( \tilde K : \Sigma) +\ln | A A^\top |, 
\end{align*} 
where $\tilde K = A^{-1} K A^{-\top}$.  
Note that for $A\in GL_{p_1,p_2}$, 
$\{ A^{-1} K A^{-\top} : K \in \mathcal S_{p_1,p_2}^+ \} = 
 \mathcal S_{p_1,p_2}^+$. 
By Proposition \ref{prop:krondivergence},  
$d(\tilde K: \Sigma)$ is minimized by $\tilde K = \Sigma_2\otimes \Sigma_1$, and so $d(K: A \Sigma A^\top)$  is minimized 
by $ K = A \tilde K A^\top =
 (A_2 \Sigma_2 A_2^\top) \otimes (A_1\Sigma_1 A_1^\top )$. 
\end{proof}

\begin{proof}[Proof of Corollary \ref{cor:kroncor}]   
Items 1 and 2 can be shown by noting that the unconstrained 
minimizer of $\ln |K| + \tr(K^{-1} \Sigma)$ over $K\in \mathcal S_p^+$ 
is $\Sigma$, and so if $\Sigma\in \mathcal S_{p_1,p_2}^+$ then 
the minimizer over $K\in \mathcal S_{p_1,p_2}^+$ is 
$\Sigma$ as well. Alternatively, 
item 1 can be shown by noting that $I_p = I_{p_2}\otimes I_{p_1}$, 
and confirming that $(I_{p_1}, I_{p_2})$ provide a solution to 
(\ref{eqn:kroneqn}) when $\Var{Y}= I_p$. Item 2 can also be shown this way, 
or
with 
Proposition \ref{prop:kronaction}: If $\Sigma = \Sigma_2\otimes \Sigma_1$ 
then 
\begin{align*}
k(\Sigma) & =  k( (\Sigma_2^{1/2}\otimes \Sigma_1^{1/2}  ) I_p 
(\Sigma_2^{1/2}\otimes \Sigma_1^{1/2}  ) ) \\
&=  (\Sigma_2^{1/2}\otimes \Sigma_1^{1/2}  )  k(I_p ) 
(\Sigma_2^{1/2}\otimes \Sigma_1^{1/2}  )  \\
& =  (\Sigma_2^{1/2}\otimes \Sigma_1^{1/2}  ) 
 (\Sigma_2^{1/2}\otimes \Sigma_1^{1/2}  )  = \Sigma. 
\end{align*}
Item 3 can also be obtained from 
Proposition \ref{prop:kronaction} by choosing (for example)
$A_1 = a I_{p_1} $ and $A_2=I_{p_2}$. 
Finally, if $\Var{Y} = \Sigma$ is diagonal then 
$\Exp{ y_i^\top A_1 y_{i'} } = 0$ for rows
$y_i$ and $y_{i'}$ of $Y$ for any matrix $A_1\in \mathbb R^{p_2\times p_2}$
unless $i=i'$. 
As a result, 
 $\Exp{ Y A_1 Y^\top }$ is diagonal, as is 
$\Exp{ Y^\top A_2 Y^\top}$ for the same reason. This implies 
that if 
$(\Sigma_1,\Sigma_2)$ is a solution to (\ref{eqn:kroneqn}) then both matrices are diagonal, as is their Kronecker product. 
\end{proof}

\begin{proof}[Proof of Proposition \ref{prop:coreid}]    
If $\Exp{YY^\top}/p_2=I_{p_1}$ and 
$\Exp{Y^\top Y}/p_1=I_{p_2}$ then $(I_{p_1},I_{p_2})$
is a solution to (\ref{eqn:kroneqn}) and so 
$k(C) = I_{p_2} \otimes I_{p_1} = I_p$. Conversely, 
if $k(C)=I_p$ then any solution to 
(\ref{eqn:kroneqn})  must be of the form 
$(cI_{p_1},c^{-1} I_{p_2} )$ for some $c>0$, which then 
implies that 
$\Exp{YY^\top}/p_2=I_{p_1}$ and 
$\Exp{Y^\top Y}/p_1=I_{p_2}$. Finally, let 
$y_j$ be the $j$th column vector of $Y$. 
Then 
\[ \Exp{ YY^\top }   = \sum_{j=1}^{p_2} \Exp{ y_j y_j^\top } =
   \sum_{j=1}^{p_2} \tilde C_{,j,,j}.  \]
\end{proof}

\begin{proof}[Proof of Proposition \ref{prop:coreprop}] 
Let $c(\Sigma)=C$, $k(\Sigma)=K$ and $h(K)=H$, 
so $\Sigma = H C H^\top$. 
By Proposition \ref{prop:kronaction}, $k( A \Sigma A^\top) = 
   A K A^\top = A H H^\top A^\top$.  
Let $\tilde K = A K A^\top$ and $\tilde H = h(\tilde K)$. 
Then 
$c(A\Sigma A^\top ) =   \tilde H^{-1} (A H)C ( AH)^\top \tilde H^{-\top}$.  
But by the definition of the square root function, we must have 
$\tilde H \tilde H^\top = \tilde K =  A H H^\top A^\top$, 
and so $\tilde H = A H R^\top$ for some $R\in \mathcal O_p$. Furthermore
this $R$ must be separable because both $\tilde H$ and $AH$ are separable. 
Thus $\tilde H^{-1} = R H^{-1} A^{-1}$ and item 1 of the result follows. 
If $A \in \mathcal H$ and $\mathcal H$ is a group, then $AH\in \mathcal H$, and so $\tilde H \equiv h( A H H^\top A^\top ) = A H$, 
giving item 2. 
\end{proof}

\begin{proof}[Proof of Proposition \ref{prop:reparam}]    
First we show that $f$ is a bijection. 
For any $\Sigma\in \mathcal S_{p}^+$, 
let $H = h(k(\Sigma))$ and $C=c(\Sigma)$. Then
\begin{align*}
g(f(\Sigma))  &=  
 H C H^\top \\
&=  H (H^{-1} \Sigma H^{-\top } ) H^\top = \Sigma. 
\end{align*} 
Conversely, let $(C,K)\in \mathcal C_{p_1,p_2}^+ \times
 \mathcal S_{p_1,p_2}^+$. Then with $H= h(K)$, we have 
\begin{align*} 
f(g(C,K))& =  f( H C H^\top ) \\
  &= (k(HCH^\top ), c( HCH^\top )). 
\end{align*}  
Since $H \in \mathcal S_{p_1,p_2}^+$, by 
Proposition \ref{prop:kronaction} we have
\begin{align*}
k( HCH^\top ) &= H k(C) H^\top  \\ 
   & =  H I H^\top  \\ 
  &= H H^\top = K. 
\end{align*} 
Finally, 
\begin{align*} 
c(HCH^\top) & = h( K )^{-1}  ( HCH^\top ) h( K )^{-\top}  \\
    & = H^{-1} (HCH^\top ) H^{-\top } = C, 
\end{align*} 
and so $f(g(C,K)) = (C,K)$.

We now show that the Kronecker covariance function 
$k$ is continuous, from which the continuity results for $f$ and $g$ follow. 
The space $\mathcal S_p^+$ is a complete Riemannian manifold with respect to 
the affine invariant metric $d_A: \mathcal S_p^+ \times \mathcal S_p^+ \rightarrow \mathbb R^+$ given by 
\[
 d_A( \Sigma, \tilde \Sigma ) =\Vert \log ( \Sigma^{-1/2} \tilde \Sigma 
  \Sigma^{-1/2})  \Vert, 
\]
where ``$\log$'' is the matrix logarithm 
\citep{bhatia_2007,higham_2008}.
Note that 
by the form of $d_A$ 
and the fact that $d_A(\Sigma,\tilde \Sigma) \leq 
  d_A(\Sigma,I_p) + d_A(I_p ,\tilde \Sigma)$, 
a subset of $\mathcal S_p^+$ is bounded under this metric 
if and only if the eigenvalues of its elements are bounded away
from zero and infinity. 

Let $\{ S_n\}$ be a sequence in 
$\mathcal S_{p}^+$ that converges to 
$\Sigma \in 
\mathcal S_{p^+}$ in this metric. 
Convergence of the sequence implies it is bounded, 
and so
there exists an interval 
$[a,b] \subset (0,\infty)$ that contains the eigenvalues of 
$S_n$ for all $n$. 
We now show that boundedness of $\{S_n\}$ implies that
the  sequence 
$\{K_n\}\equiv \{ k(S_n)\}$ is bounded. 
Recall that $K_n$ is the minimizer of the divergence 
$d$ over $\mathcal S_{p_1,p_2}^+$, and so 
 $d(K_n:S_n) \leq  d( I_p: S_n)$. 
Using this fact and the bounds on the eigenvalues of $\{S_n\}$, we have  
\[
  \sum_{j=1}^p ( \log l_{n,j} + a/l_{n,j})   =
d( K_n: a I_p) \leq  d( K_n: S_n )  \leq  d(I_p: S_n ) \leq p b, 
\]
where $l_{n,j}$ is the $j$th largest eigenvalue of $S_n$. Noting 
that $\log x + a/x$ is a convex function with a minimum at 
$x=a$, we have for each $k\in \{1,\ldots,p\}$ 
\[
\log l_{n,k} + a/l_{n,k}  \leq  
   pb - \sum_{j\neq k} ( \log l_{n,j} + a/l_{n,j} )  
  \leq  pb - (p-1)\times ( \log a + 1 ). 
\]
Since $\log x + a/x$ diverges as $x$ goes to zero or infinity, 
the above bound implies that there exists $[c,d] \subset (0,\infty)$
that contains 
$l_{n,k}$ for 
  all $n$ and $k$, that is, $\{K_n\}$ is bounded. 

Now let $\{ K_{n_s}\}$ be any convergent subsequence of 
$\{K_n\}$ and let $K = k(\Sigma)$. 
Let $K_{n_s} \rightarrow K^*$, and so 
$d(K_{n_s} : S_{n_s} )  \leq  d(K : S_{n_s})$. 
Since $d$ is jointly continuous in both of its arguments, taking 
the limit of the previous inequality gives 
$d(K^* : \Sigma )  \leq  d(K : \Sigma)$, which implies that 
$K^*=K$. 
This implies that $K_n\rightarrow K$ because
the closure of the bounded set $\{ K_n\}$ is itself bounded and therefore 
sequentially compact by the completeness of $\mathcal S_{p}^+$. Thus 
$k$ is continuous. Furthermore, since the topology of $\mathcal S_p^+$ under the affine invariant metric is the same as that under the Euclidean metric 
\citep[Theorem 2.55]{lee_2018}, $k$ is continuous for this metric space as well.
Finally, the functions  $f$ and $g$ are continuous because they are both compositions of the continuous function $k$ with other continuous functions. 
\end{proof}

\begin{proof}[Proof of Proposition \ref{prop:consistency}]    
We first find a limiting form for an objective
function from which $\hat \nu$ is obtained. 
To facilitate our analysis, we use the objective function 
$l_n(r,\hat C)  =  -2   \log L(n r)/n +p (\log\tfrac{n}{2}-1)$ with 
$L$ defined in (\ref{eqn:nuobj}), so that 
the estimated value of $\nu$ is 
$\hat \nu = n\times \arg \min_{r\geq (p+1)/n} l_n(r,\hat C) = 
    \arg\max_{\nu\geq p+1} L(\nu)$, where now we make explicit the 
dependence of the objective function on the sample core matrix $\hat C$.  
As a function of $(r,C) \in \mathbb R^+ \times \mathcal C_{p_1,p_2}^+$, 
the objective function is then $l_n(r,C) = a_n + b_n(r) + c_n(r,C)$ where
 $a_n = p (\log\tfrac{n}{2} -1)$ and 
\begin{align*}
b_n &= -\frac{2}{n}\log \left ( \frac{\Gamma_p(n(1+r)/2)}{\Gamma_p(nr/2)} \right ) +
 p(1+r) \log(1 + r + \delta) - pr\log(r + \delta)  \\
c_n &  =    (1+r)  \log \vert (1-w) C + w I_p\vert 
\end{align*}
where $w=(r+\delta)/(1+r+\delta)$ with $\delta = -(p+1)/n$. 
We will show that as $n\rightarrow \infty$, $a_n+b_n(r)$  
converges uniformly to zero for $r \in [\epsilon,\infty)$ and 
$c_n(r,C)$ converges uniformly to $l(r,C)$, where
\[
  l(r,C) =   (1+r) \log \vert C/(1+r)  + 
 r I_p/(1+r)   \vert .
\] 
We start by showing 
convergence of $c_n(r,C)$ to $l(r,C)$, i.e., that the  
difference between $w$ and $r/(1+r)$ is asymptotically negligible. 
To see this, 
recall that 
the log determinant of a matrix is a continuous function, and so 
is uniformly continuous on the compact set of 
convex combinations of core matrices and 
the identity.
Next, we have that $(1-w) C  + w I_p$ converges uniformly 
to $C/(1+r) + rI_p/(1+r)$, because the norm of their difference is 
$\Vert (w - \tfrac{r}{1+r})(I_p -C ) \Vert < \sqrt{p(p-1)} 
   |w-\tfrac{r}{r+1}|$, 
and $\vert w - \tfrac{r}{1+r} \vert = \tfrac{\delta}{(1 + r + \delta)(1+r)}$ converges to zero uniformly in $r$ for $r>0$. 

Next
 we use Stirling's approximation $\log\big(\Gamma(z)\big) = z\log(z) - z + \tfrac{1}{2}\log(2\pi/z) + O(z^{-1})$ on the 
multivariate gamma terms of $b_n(r)$. 
Letting $\delta_j = (1 - j)/n$, we have
\begin{align*} 
&  - \frac{2}{n}\bigg( \log\big(\Gamma_p\big(n(1+r)/2\big)\big) - \log\big(\Gamma_p(nr/2)\big)\bigg)    \\ 
& =    -\frac{2}{n}\sum_{j = 1}^p \log\big(\Gamma( \frac{n(1+r) + 1 - j}{2})\big) + \log\big(\Gamma(\frac{nr + 1 - j}{2}) \big )\\
 &    =   \sum_{j = 1}^p   \bigg( - (1 + r + \delta_j)\log\big(n(1 + r + \delta_j)/2\big) + (r + \delta_j)\log\big(n(r + \delta_j)/2\big) + 1 \bigg) 
    \\
  &  -  \frac{1}{n} \sum_{j=1}^p \log(\frac{r + \delta_j}{1  + r + \delta_j}) + O\big(n^{-1}(1+r)^{-1}\big) + O\big((nr)^{-1}\big). 
  \end{align*}
 The last three terms in the above expression converge uniformly to $0$ over $r \in [\epsilon,\infty)$ for any $\epsilon >0$. Adding $a_n$ and the 
remaining terms  of $b_n(r)$ gives $a_n+ b_n(r)$ being approximately 
equal to
  \begin{align*}
     - r\sum_{j = 1}^p\log \left ( \frac{(1+r+\delta_j)(r+\delta)}{(1 + r + \delta)(r + \delta_j)} \right ) - \sum_{j=1}^p\log\left ( \frac{1+r+\delta_j}{1 + r+\delta} \right )- \sum_{j = 1}^p \delta_j\log\left ( \frac{1+r+\delta_j}{r+\delta_j}\right ).
  \end{align*}
  The second and third sums above converge uniformly to zero over $r \in [\epsilon,\infty)$. Regarding the first sum, consider the ratio
  \begin{align*}
     \left (\frac{1+r+\delta_j}{1+r + \delta}\right )^r = \left (1 + \frac{\delta_j - \delta}{1 + r + \delta}\right )^{r + 1 + \delta}\left (1 + \frac{\delta_j - \delta}{1 + r + \delta}\right )^{-(1+\delta)}. 
  \end{align*}
The log of the second factor on the right converges to zero uniformly in $r$. For the first factor we have 
  \begin{align*}
      1 \leq \left( 1 + \frac{\delta_j - \delta}{1 + r + \delta}\right)^{1 + r + \delta} \leq e^{\vert \delta_j \vert + \vert  \delta \vert} \rightarrow 1 
  \end{align*} 
as $n\rightarrow \infty$, 
   where the first inequality follows from $\delta_j - \delta \geq 0$.
  Similarly,
  \begin{align*}
      1 \geq  \left (\frac{r+\delta}{r+\delta_j}\right )^r 
      = \left (1 + \frac{\delta - \delta_j}{r + \delta_j} \right )^r  
    \geq  \left (1 + \frac{\delta - \delta_j}{\epsilon + \delta_j} \right )^{\epsilon + \delta_j}  \left (1 + \frac{\delta - \delta_j}{r + \delta_j} \right )^{-\delta_j} \rightarrow  1
  \end{align*}
as $n\rightarrow \infty$.
Thus $a_n+b_n(r)$ converges uniformly to zero on $r\in [\epsilon,\infty)$ for 
any $\epsilon>0$.  

The above calculation shows that our objective function $l_n(r,C)$ 
converges uniformly to $l(r,C)$ for $(r,C) \in [\epsilon,\infty) \times \mathcal C_{p_1,p_2}^+$. We want to show that this limiting objective function is strictly increasing in $r$ if $C \neq I_p$, so in this scenario where $\Sigma$ is not separable 
the estimated weight on the sample Kronecker covariance 
converges to zero. 
To see that this is the case, 
let $c_1,\ldots c_p$ be the eigenvalues of $C$, so that
\begin{align*} 
l(r,C)  &= \sum_{j=1}^p (1+r) \log \left ( \frac{ r+c_j}{1+r}   \right )
  = \sum_{j=1}^p (1+r) \log \left ( 1+ \frac{c_j-1}{1+r}  \right ).
\end{align*}
The derivative of the $j$th term of the sum with respect to $r$ is 
  $\log ( 1+\tfrac{ c_j-1}{1+r} ) - \tfrac{ c_j-1}{ r+c_j}$. 
Since 
$\log(1+x) \geq x/(1+x)$ for $x > -1$ (with strict inequality for $x \neq 0$) this  derivative is positive for $(c_j-1)/(r+1)> -1$, or equivalently,
for $c_j> - r$, which holds for each $j=1,\ldots p$ because 
$C$ is positive definite. 
Additionally, because $C\neq I_p$ there is at least one $j$ for which 
$c_j\neq 1$, so at least one term 
in the sum  has a strictly positive derivative, making our 
objective function a strictly increasing function of $r$. 

Finally, let $\hat{C} = c(S)$ and $C = c(\Sigma)$.
We want to show that  $\hat{r}$, the minimizer of  $l_n(r,\hat{C})$ over $r\geq (p+1)/n$, converges in probability to zero
if $C\neq I_p$, 
or equivalently 
$\Pr(\hat{r} > \epsilon) \rightarrow 0$ for any $\epsilon>0$. 
By the result in the previous paragraph, $l(\epsilon/2,C) < l(\epsilon,C)$ 
and by the continuity of $l$ there is a ball $B$ around $C$ 
that does not contain $I_p$ such that
\begin{align*} 
  \inf_{\tilde C\in B} l(\epsilon,\tilde C) -  \sup_{\tilde C\in B}l(\epsilon/2,\tilde C) = \delta>0. 
\end{align*}
By the uniform convergence of $l_n$ to $l$, there is an $N$ such that 
 $|l_n(r,\tilde C)-l(r,\tilde C)|<\delta/2$ for $n>N$ and 
 all $\tilde C\in B$ 
and $r\geq \epsilon/2$. If $\hat C \in B$ then for any $r \geq \epsilon$ 
\begin{align*}
l_n(\epsilon/2,\hat C)  <  l(\epsilon/2,\hat C) + \delta/2  
 & \leq \sup_{\tilde C\in B} l(\epsilon/2,\tilde C ) + \delta/2  \\ 
 & = \inf_{\tilde C\in B} l(\epsilon,\tilde C) - \delta/2  \\
  & \leq l(\epsilon ,\hat C) - \delta/2 
  \\
  & \leq l(r,\hat C) - \delta/2
  \\
  & < l_n(r,\hat C)
\end{align*}
and so  $\hat r< \epsilon $ for $n>N$ and $\hat C \in B$. 
Thus $\Pr(\hat r>\epsilon ) \leq \Pr( \hat C \not\in B) \rightarrow 0$ 
as $n\rightarrow \infty$, because $B$ is a neighborhood of $C$ 
and $\hat C$ is consistent for $C$ by Corollary \ref{cor:khatconsistency}.  
Thus $ \hat r$ and $\hat w$ converge in probability to zero as $n\rightarrow 
\infty$ if $C\neq I$, that is, if $\Sigma$ is not separable. 
\end{proof}

\bibliography{kcd} 

\begin{thebibliography}{}

\bibitem[\protect\citeauthoryear{Anderson}{Anderson}{2003}]{anderson_2003}
Anderson, T.~W. (2003).
\newblock {\em An introduction to multivariate statistical analysis\/} (Third
  ed.).
\newblock Wiley Series in Probability and Statistics. Wiley-Interscience [John
  Wiley \& Sons], Hoboken, NJ.

\bibitem[\protect\citeauthoryear{Bhatia}{Bhatia}{2007}]{bhatia_2007}
Bhatia, R. (2007).
\newblock {\em Positive definite matrices}.
\newblock Princeton Series in Applied Mathematics. Princeton University Press,
  Princeton, NJ.

\bibitem[\protect\citeauthoryear{Dawid}{Dawid}{1981}]{dawid_1981}
Dawid, A.~P. (1981).
\newblock Some matrix-variate distribution theory: notational considerations
  and a {B}ayesian application.
\newblock {\em Biometrika\/}~{\em 68\/}(1), 265--274.

\bibitem[\protect\citeauthoryear{Derksen and Makam}{Derksen and
  Makam}{2021}]{derksen_makam_2021}
Derksen, H. and V.~Makam (2021).
\newblock Maximum likelihood estimation for matrix normal models via quiver
  representations.
\newblock {\em SIAM J. Appl. Algebra Geom.\/}~{\em 5\/}(2), 338--365.

\bibitem[\protect\citeauthoryear{Drton, Kuriki, and Hoff}{Drton
  et~al.}{2021}]{drton_kuriki_hoff_2021}
Drton, M., S.~Kuriki, and P.~Hoff (2021).
\newblock Existence and uniqueness of the {K}ronecker covariance {MLE}.
\newblock {\em Ann. Statist.\/}~{\em 49\/}(5), 2721--2754.

\bibitem[\protect\citeauthoryear{Dutilleul}{Dutilleul}{1999}]{dutilleul_1999}
Dutilleul, P. (1999).
\newblock The {MLE} algorithm for the matrix normal distribution.
\newblock {\em Journal of Statistical Computation and Simulation\/}~{\em 64},
  105--123.

\bibitem[\protect\citeauthoryear{Friedman}{Friedman}{1989}]{friedman_1989}
Friedman, J.~H. (1989).
\newblock Regularized discriminant analysis.
\newblock {\em J. Amer. Statist. Assoc.\/}~{\em 84\/}(405), 165--175.

\bibitem[\protect\citeauthoryear{Gerard and Hoff}{Gerard and
  Hoff}{2015}]{gerard_hoff_2015}
Gerard, D. and P.~Hoff (2015).
\newblock Equivariant minimax dominators of the {MLE} in the array normal
  model.
\newblock {\em J. Multivariate Anal.\/}~{\em 137}, 32--49.

\bibitem[\protect\citeauthoryear{Gerard and Hoff}{Gerard and
  Hoff}{2016}]{gerard_hoff_2016}
Gerard, D. and P.~Hoff (2016).
\newblock A higher-order {LQ} decomposition for separable covariance models.
\newblock {\em Linear Algebra Appl.\/}~{\em 505}, 57--84.

\bibitem[\protect\citeauthoryear{Greene and Rayens}{Greene and
  Rayens}{1989}]{greene_rayens_1989}
Greene, T. and W.~S. Rayens (1989).
\newblock Partially pooled covariance matrix estimation in discriminant
  analysis.
\newblock {\em Comm. Statist. Theory Methods\/}~{\em 18\/}(10), 3679--3702.

\bibitem[\protect\citeauthoryear{Greenewald, Zelnio, and Hero}{Greenewald
  et~al.}{2016}]{greenewald_zelnio_hero_2016}
Greenewald, K., E.~Zelnio, and A.~H. Hero (2016).
\newblock Robust sar stap via kronecker decomposition.
\newblock {\em IEEE Transactions on Aerospace and Electronic Systems\/}~{\em
  52\/}(6), 2612--2625.

\bibitem[\protect\citeauthoryear{Higham}{Higham}{2008}]{higham_2008}
Higham, N.~J. (2008).
\newblock {\em Functions of matrices}.
\newblock Society for Industrial and Applied Mathematics (SIAM), Philadelphia,
  PA.
\newblock Theory and computation.

\bibitem[\protect\citeauthoryear{Hoff}{Hoff}{2016}]{hoff_2016a}
Hoff, P. (2016).
\newblock Limitations on detecting row covariance in the presence of column
  covariance.
\newblock {\em J. Multivariate Anal.\/}~{\em 152}, 249--258.

\bibitem[\protect\citeauthoryear{Huber}{Huber}{1967}]{huber_1967}
Huber, P. (1967).
\newblock The behavior of maximum likelihood estimators under non-standard
  conditions.
\newblock In L.~LeCam and J.~Neyman (Eds.), {\em Proceedings of the Fifth
  Berkeley Symposium on Mathematical Statistics and Probability}, pp.\
  221--233. University of California Press.

\bibitem[\protect\citeauthoryear{Lee}{Lee}{2018}]{lee_2018}
Lee, J.~M. (2018).
\newblock {\em Introduction to {R}iemannian manifolds}, Volume 176 of {\em
  Graduate Texts in Mathematics}.
\newblock Springer, Cham.
\newblock Second edition of [ MR1468735].

\bibitem[\protect\citeauthoryear{Ligges, Krey, Mersmann, and
  Schnackenberg}{Ligges et~al.}{2018}]{ligges_et_al_2018}
Ligges, U., S.~Krey, O.~Mersmann, and S.~Schnackenberg (2018).
\newblock {\em {tuneR}: Analysis of Music and Speech}.

\bibitem[\protect\citeauthoryear{Mardia and Goodall}{Mardia and
  Goodall}{1993}]{mardia_goodall_1993}
Mardia, K.~V. and C.~R. Goodall (1993).
\newblock Spatial-temporal analysis of multivariate environmental monitoring
  data.
\newblock In {\em Multivariate environmental statistics}, Volume~6 of {\em
  North-Holland Ser. Statist. Probab.}, pp.\  347--386. North-Holland,
  Amsterdam.

\bibitem[\protect\citeauthoryear{Masak and Panaretos}{Masak and
  Panaretos}{2022}]{masak_panaretos_2022}
Masak, T. and V.~M. Panaretos (2022).
\newblock Random surface covariance estimation by shifted partial tracing.
\newblock {\em Journal of the American Statistical Association\/}, forthcoming.

\bibitem[\protect\citeauthoryear{Masak, Sarkar, and Panaretos}{Masak
  et~al.}{2022}]{masak_sarkar_panaretos_2022}
Masak, T., S.~Sarkar, and V.~M. Panaretos (2022).
\newblock Separable expansions for covariance estimation via the partial inner
  product.
\newblock {\em Biometrika}, forthcoming. 

\bibitem[\protect\citeauthoryear{Rao and Manjunath}{Rao and
  Manjunath}{2017}]{rao_manjunath_speech_2017}
Rao, K.~S. and K.~E. Manjunath (2017).
\newblock Speech recognition using articulatory and excitation source features.

\bibitem[\protect\citeauthoryear{Rapcs\'{a}k}{Rapcs\'{a}k}{1991}]{rapcsak_1991}
Rapcs\'{a}k, T. (1991).
\newblock Geodesic convexity in nonlinear optimization.
\newblock {\em J. Optim. Theory Appl.\/}~{\em 69\/}(1), 169--183.

\bibitem[\protect\citeauthoryear{Rayens and Greene}{Rayens and
  Greene}{1991}]{rayens_greene_1991}
Rayens, W. and T.~Greene (1991).
\newblock Covariance pooling and stabilization for classification.
\newblock {\em Comput. Statist. Data Anal.\/}~{\em 11\/}(1), 17--42.

\bibitem[\protect\citeauthoryear{Ro\'{s}, Bijma, de~Munck, and
  de~Gunst}{Ro\'{s} et~al.}{2016}]{ros_bijma_demunck_degunst_2016}
Ro\'{s}, B., F.~Bijma, J.~C. de~Munck, and M.~C.~M. de~Gunst (2016).
\newblock Existence and uniqueness of the maximum likelihood estimator for
  models with a {K}ronecker product covariance structure.
\newblock {\em J. Multivariate Anal.\/}~{\em 143}, 345--361.

\bibitem[\protect\citeauthoryear{Rougier}{Rougier}{2017}]{rougier_2017}
Rougier, J. (2017).
\newblock A representation theorem for stochastic processes with separable
  covariance functions, and its implications for emulation.

\bibitem[\protect\citeauthoryear{Soloveychik and Trushin}{Soloveychik and
  Trushin}{2016}]{soloveychik_trushin_2016}
Soloveychik, I. and D.~Trushin (2016).
\newblock Gaussian and robust {K}ronecker product covariance estimation:
  existence and uniqueness.
\newblock {\em J. Multivariate Anal.\/}~{\em 149}, 92--113.

\bibitem[\protect\citeauthoryear{Srivastava, von Rosen, and
  Von~Rosen}{Srivastava et~al.}{2008}]{srivastiva_vonrosen_vonrosen_2008}
Srivastava, M., T.~von Rosen, and D.~Von~Rosen (2008).
\newblock Models with a {K}ronecker product covariance structure: estimation
  and testing.
\newblock {\em Mathematical Methods of Statistics\/}~{\em 17\/}(4), 357--370.

\bibitem[\protect\citeauthoryear{Stein}{Stein}{1975}]{stein_1975}
Stein, C. (1975).
\newblock {\em Estimation of a covariance matrix}.
\newblock Atlanta, Georgia: Rietz Lecture, 39th Annual Meeting of the IMS.

\bibitem[\protect\citeauthoryear{Stein}{Stein}{2005}]{stein_2005}
Stein, M.~L. (2005).
\newblock Space-time covariance functions.
\newblock {\em J. Amer. Statist. Assoc.\/}~{\em 100\/}(469), 310--321.

\bibitem[\protect\citeauthoryear{Takemura}{Takemura}{1983}]{takemura_1983}
Takemura, A. (1983).
\newblock An orthogonally invariant minimax estimator of the covariance matrix
  of a multivariate normal population.
\newblock Technical report, DTIC Document.

\bibitem[\protect\citeauthoryear{Warden}{Warden}{2017}]{warden_2017}
Warden, P. (2017).
\newblock Speech commands: A public dataset for single-word speech recognition.
\newblock
  \url{http://download.tensorflow.org/data/speech_commands_v0.01.tar.gz}.

\bibitem[\protect\citeauthoryear{Warden}{Warden}{2018}]{warden_2018}
Warden, P. (2018).
\newblock Speech commands: A dataset for limited-vocabulary speech recognition.

\bibitem[\protect\citeauthoryear{Werner, Jansson, and Stoica}{Werner
  et~al.}{2008}]{werner_jansson_stoica_2008}
Werner, K., M.~Jansson, and P.~Stoica (2008).
\newblock On estimation of covariance matrices with {K}ronecker product
  structure.
\newblock {\em IEEE Trans. Signal Process.\/}~{\em 56\/}(2), 478--491.

\bibitem[\protect\citeauthoryear{Wiesel}{Wiesel}{2012}]{wiesel_2012}
Wiesel, A. (2012).
\newblock Geodesic convexity and covariance estimation.
\newblock {\em IEEE Transactions on Signal Processing\/}~{\em 60\/}(12),
  6182--6189.

\bibitem[\protect\citeauthoryear{Yin and Li}{Yin and Li}{2012}]{yin_li_2012}
Yin, J. and H.~Li (2012).
\newblock Model selection and estimation in the matrix normal graphical model.
\newblock {\em J. Multivariate Anal.\/}~{\em 107}, 119--140.

\bibitem[\protect\citeauthoryear{Zhang and Schneider}{Zhang and
  Schneider}{2010}]{zhang_schneider_2010}
Zhang, Y. and J.~Schneider (2010).
\newblock Learning multiple tasks with a sparse matrix-normal penalty.
\newblock In J.~Lafferty, C.~Williams, J.~Shawe-Taylor, R.~Zemel, and
  A.~Culotta (Eds.), {\em Advances in Neural Information Processing Systems},
  Volume~23. Curran Associates, Inc.

\end{thebibliography}

\end{document}